\newtheorem{theorem}{Theorem}
\newtheorem{lemma}{Lemma}
\title{Designing Local Distributed Mechanisms}
\author{Juho Hirvonen \\ Helsinki Institute for Information Technology HIIT and Aalto University, Finland \\
\texttt{juho.hirvonen@aalto.fi} \and Sara Ranjbaran \\ Aalto University, Finland \\ \texttt{sara.ranjbaran@aalto.fi}}
\date{June 26, 2026}
\begin{document}

\maketitle

\begin{abstract}
	Mechanism design studies algorithms as games with the goal of designing algorithms that incentivise participants to be truthful: to behave according to their real preferences. In this work we introduce a new notion: local distributed mechanisms. These are truthful mechanisms that have an implementation as fast distributed algorithms and have non-trivial approximation guarantees. We show how monotone distributed optimisation algorithms can be automatically transformed into truthful mechanisms using the seminal result known as Myerson's Lemma. We demonstrate mechanisms for four fundamental graph problems: maximum-weight independent set, minimum-weight vertex cover, minimum-weight dominating set, and a variant of weighted colouring.
	
	We study distributed mechanism design in two settings. In the trusted setting we are only interested in a distributed implementation of a mechanism and assume that the algorithm is executed correctly. We also study what happens when the strategic agents themselves are responsible for running the algorithm, and show that a large class of algorithms cannot be implemented. Despite this impossibility, we are able to design a non-trivial incentive-compatible mechanism for weighted independent set problem.
	
	Our work extends previous work in Distributed Algorithmic Mechanism Design (DAMD) in a new direction. Instead of studying global problems like routing or leader election, we study local resource allocation problems. Our algorithms are simple and thus potentially practical. Local algorithms are particularly interesting for highly dynamic large-scale systems, and there are many potential future application domains, e.g.\ demand-side load management in electric grids or resource allocation in IoT computing.
\end{abstract}

\section{Introduction}

In mechanism design the goal is to implement a social choice function as a game. Participating agents hold private information that is critical to choosing a good assignment. The task of the mechanism designer is to come up with an algorithm and incentives such that the agents should always reveal their private information to the mechanism. A classic example is the sale of a single item: how should we assign the item and what should it cost? One of the most common solutions is a sealed-bid first-price auction: each participant sends a bid in secret and the largest bidder gets the item, paying its own bid. The problem with this mechanism is that it is game-theoretically unstable: the optimal behaviour for the winner is to bid just above the (unknown) second-highest bid. Trying to reason about the valuations of the other agents makes participating in this auction difficult. The \emph{second-price} auction fixes this issue: now the winner pays the second highest bid, removing the need to guess what the second bid was~\cite{vickrey61auction}. This can be seen as the auction automatically applying the optimal strategy of the first-price auction for the winner. The second-price auction is \emph{incentive-compatible}: bidding their true valuation guarantees the optimal outcome for all agents.

Mechanism design has many important practical applications in addition to auctions. The deferred acceptance algorithm can be used to e.g.\ assign students to schools~\cite{gale62sm,nobel12} and the top trading cycles algorithm to create stable chains of organ donations~\cite{roth04kidney}. Google uses large-scale automated auctions to sell the advertisements on its searches~\cite{edelman07internet}.

Mechanism design typically assumes a single entity that runs the underlying algorithm and with which all participating agents must interact. In this work we study how this bottleneck can be removed: we want to design mechanisms that \emph{can be implemented as fast distributed algorithms}. Mechanism design for distributed systems has been studied previously (\emph{distributed algorithmic mechanism design (DAMD)})~\cite{feigenbaum07book}. To our knowledge, all previous work in this area has been on fundamentally global problems, such as routing and leader election. In contrast, we want to study fundamentally \emph{local} graph problems, where the goal is to solve locally constrained graph problems, such as scheduling conflicting transmissions or ensuring spatial coverage. We believe understanding such mechanisms will be an important building block in future automated systems: as examples, mechanisms have been proposed for resource allocation in IoT networks~\cite{Prabodini-IoT} and simple mechanisms are already used to change consumer behaviour in electric grids, trying to match electricity demand and generation~\cite{jordehi19optimisation}.

We introduce the notion of \emph{local distributed mechanisms}. These are mechanisms for resource allocation in networks that can be implemented as fast distributed algorithms. As an example, consider the problem of computing an independent set of large weight in the mechanism design setting: each agent has a private parameter, its weight, that gives its payoff for being part of the independent set. The goal is to, essentially, arrange a distributed auction to decide which agents should join the independent set. We study distributed algorithms in the standard LOCAL and CONGEST models~\cite{Peleg2000}.

Local distributed mechanisms can be designed based on the seminal result in mechanism design known as Myerson's lemma~\cite{myerson81optimal}. This result states that in \emph{single-parameter} settings such as the weighted independent set problem, an incentive-compatible mechanism exists exactly when the allocation rule is monotone: increasing an agent's bid will never cause it to no longer be selected. Crucially the result does not depend in any way on the algorithm that implements the allocation rule: we can use a monotone \emph{distributed} optimisation algorithm. Myerson's lemma also gives the formula for the payments, known as the \emph{critical price}. In the context of weighted independent set, each selected agent has to pay the smallest bid that results in it being selected. This is a generalisation of the second-price auction~\cite{vickrey61auction}: the mechanism simulates optimal bids in a first-price auction.

In mechanism design a key assumption is the existence of a central entity known as the \emph{principal}: all agents communicate with the principal which is responsible for computing the outcome. The principal is also responsible for making or collecting the payments of the agents. While communicating the private inputs to a centralised entity is not necessary in the distributed setting, we will continue assuming the existence of an outside entity that handles the payments. The existence of such an outside entity is necessary to an extent, as the payments generally need to be removed from the system (instead of just be redistributed) to the required effect on incentives. We believe this model is reasonable in highly dynamic settings: the structure and valuations change constantly, and it is important that the mechanism is able to change its allocation to match these changes. On the other hand the payments can be recorded locally and be handled eventually by the principal. 

Another important modelling choice is the role of the agents in running the mechanism itself. We study two variants of what we call the \emph{strategic LOCAL model}: in the \emph{trusted} model, the agents only reveal their own inputs and the mechanism is run by a trusted distributed layer. In the \emph{consistent} model the strategic agents themselves are responsible for computation and communication, but must report their outputs consistently to the principal. The trusted model is motivated by settings where the strategic agents are humans whose actions need to be automatically coordinated based on their preferences. In electric grids, there are many problems related to \emph{demand response}~\cite{jordehi19optimisation} where it would be useful to reduce electricity demand during peak demand or low supply. For example, charging of electric vehicles can exhaust local grid capacity. A distributed mechanism for this problem would take as input the preferences of the human customers and output a schedule for automatic vehicle charging with price adjustments to incentivise the customers to use such an allocation. The role of the human agent is to provide their preferences to a trusted system. 

\subsection{Our contribution}

Our work represents the first systematic study of implementing mechanisms as local distributed graph algorithms. We study two main questions: whether mechanisms can be implemented as distributed algorithms and whether these algorithms can be run by the agents themselves.

We begin by showing that a seminal result mechanism design, Myerson's Lemma, is local. This means that any monotone distributed optimisation algorithm in the LOCAL model can be automatically transformed into an incentive-compatible mechanism with the same distributed time complexity (Theorem~\ref{thm:main}). The key observation, given in Lemma~\ref{lem:local-payments}, is that computing the critical prices (i.e.\ deciding what is the smallest bid that causes an agent to be selected) is a local property: in the LOCAL model each node can gather full information about its neighborhood and simulate the algorithm with every possible bid.

To illustrate this result, we present an incentive-compatible mechanism for the $\Delta$-approximation of maximum-weight independent set in Section~\ref{sec:independent-set}. This mechanism is based on a simple greedy algorithm that repeatedly selects the local maxima of a graph. As it turns out, the simplicity of this algorithm is very important for the implementation of mechanisms in more constrained settings. 

The method for computing the critical prices does not automatically translate to the CONGEST model. In Section~\ref{sec:independent-set} we show that the weighted independent set mechanism can also be implemented in the CONGEST model. The key idea is to observe that there is a much simpler combinatorial property of the solution that determines the critical price: If a selected agent were to bid under the bid of its neighbor, then that neighbor would join the set if and only if it did not have another selected neighbor with a higher bid blocking it.

In Section~\ref{sec:unreliable} we show that the mechanism for weighted independent set can also be implemented in a setting where the agents themselves are responsible for the computation and the communication (Theorem~\ref{thm:consistent-is}). In contrast we show that algorithms that are flexible enough and depend on information relayed by neighboring agents (e.g.\ a greedy algorithm with a score function scaled with agent's degree) do not have an incentive-compatible implementation as a mechanism. Our algorithm relies on two key properties. The greedy selection is based on pairwise comparisons, and an agent is selected if it wins all of its neighbors. Therefore there is no need to directly receive information about the network beyond the immediate neighbors. Similarly, to construct an incentive-compatible tie-breaking (coloring) algorithm, we use Luby's algorithm as a basis~\cite{Luby1986}, but an agent is selected only if it wins a pairwise random trial against each of its neighbors. Each pairwise trial works by simultaneously exchanging random bits with the neighbor and taking the exclusive or of the shared and received private bits to obtain random bits that are known to both parties and cannot be controlled by either one alone.

To our knowledge, this represents the first distributed implementation of a mechanism that resolves the issue of relaying information without assumptions about network structure or all-to-all communication. We hope that our techniques will be useful in designing more sophisticated distributed mechanisms.

In Section~\ref{sec:additional-mechanisms} we present several examples of incentive-compatible mechanisms for classic graph problems. These mechanisms illustrate different algorithmic techniques, e.g.\ the local ratio algorithm~\cite{baryehuda81linear}.

We assume that the agents' private weights come from a bounded set of possible values, as the greedy strategy of choosing local maxima can take linear time to converge otherwise. In Section~\ref{sec:discretisation} we show that the mechanism designer can enforce such discrete bids, even if the true weights are continuous, without losing incentive-compatibility. The tradeoff is a small decrease in the approximation ratio.

\subsection{Open questions}

Our work initiates the study of local distributed mechanisms for graph problems. There are many interesting open problems both in the trusted and consistent settings.
\begin{enumerate}
	\item \emph{Mechanisms with small messages.} We show that there is general method for computing the critical prices defined by a monotone distributed optimisation algorithm. However, this general method relies on the unbounded bandwidth provided by the LOCAL model. For most applications this is not practical, and it would be important to understand e.g.\ if the critical prices of the classical greedy set cover algorithm, used in Section~\ref{sec:dominating-set}, can be computed efficiently in the CONGEST model.
	\item \emph{Monotonicity in distributed optimisation.} Myerson's lemma automatically gives a distributed mechanism based on any monotone distributed local optimisation algorithm. This means that it is important to understand which distributed algorithms are monotone and which are not. We propose that this is an important property that is of interest in any new proposed distributed optimisation algorithms. Similarly it is interesting to understand which algorithmic design techniques contradict this property. For example, algorithms working on some ``good'' subset of the input most likely are not monotone, but, as we have shown in Section~\ref{sec:discretisation}, grouping nodes by their weight can be effective and retain truthfulness.
	\item \emph{Mechanisms with large degrees.} We have studied distributed mechanisms with the assumption that the underlying graph has small maximum degree $\Delta$. This is because we use colouring for tie-breaking, forcing a linear in $\Delta$ term into the running time of each of our algorithms. This appears to be difficult to avoid, so we ask whether there exist distributed mechanisms with non-trivial optimisation guarantees and sublinear-in-$\Delta$ running times?
	\item \emph{Price of truthfulness.} Since distributed mechanisms operate in a narrower design space, it is natural to ask if non-monotone algorithms can achieve better approximations than monotone algorithms. It would be interesting to understand if there are natural cases where there is a real cost in locality associated with requiring truthful distributed algorithms.
\end{enumerate} 

\subsection{Related work}

There exist two lines of research that are related to the two aspects of our work. \emph{Distributed Algorithmic Mechanism Design (DAMD)} was introduced for studying mechanism design for distributed systems~\cite{feigenbaum01sharing,feigenbaum02bgp}. The goal is to have distributed implementations of mechanisms run by the strategic agents themselves. The problems studied in this line of research have been fundamentally \emph{global} in nature (i.e.\ computation requires gathering information from the whole network to one point). These include problems such as routing~\cite{feigenbaum02bgp,roughgarden2007routing}, multicast cost sharing~\cite{Acher-approximation,feigenbaum01sharing}, scheduling~\cite{carroll11distributed} or leader election~\cite{afek14building}. For a survey on DAMD, we refer to a book chapter by Feigenbaum, Schapira, and Schenker~\cite{feigenbaum07book}.

Second, in algorithmic mechanism design there has been some work on mechanisms for local graph problems. Studied problems include weighted vertex cover~\cite{calinescu15bounding,elkind07frugality} and set cover~\cite{devanur03strategy,li10mechanism}. These works have been, however, from the centralised perspective.

The \emph{Vickrey-Clarke-Groves (VCG)} mechanism is a generic framework for solving any optimisation problem truthfully~\cite{vickrey61auction,clarke71,groves73}. This mechanism does not have a fast distributed implementation, as it requires the computation of an optimal solution. This requires linear time in the LOCAL model for any reasonable optimisation problem.

Distributed graph algorithms for the tasks we study here have received a lot of attention, but most works do not consider any game-theoretic elements. Collet, Fraigniaud, and Penna~\cite{collet18equilibria} study the equilibria of simple randomised algorithms for basic symmetry breaking tasks. Hirvonen, Schmid, Schmid, and Chatterjee~\cite{hirvonen23convergence} present games where best-response dynamics simulate distributed algorithms.

The incentive-compatible deferred acceptance algorithm~\cite{gale62sm} for computing a stable matching has a natural implementation as a distributed proposal algorithm. While the operations are local, the algorithm can take $\Omega(n^2)$ rounds to converge~\cite{ostrovsky2015fast} and requires $\Omega(n)$ rounds in the LOCAL model~\cite{floreen10stable}. To circumvent this, Flor{\'{e}}en, Kaski, Polishchuk, and Suomela~\cite{floreen10stable} and Ostrovsky and Rosenbaum~\cite{ostrovsky2015fast} study distributed algorithms for computing so-called almost stable matchings. These algorithms are no longer incentive-compatible. Hirvonen and Ranjbaran showed that the variant where one side has common preferences admits a local distributed algorithm~\cite{hirvonen24fast}.

\section{Preliminaries} \label{sec:preliminaries}

\subsection{Game theory and mechanism design} \label{ssec:gt}

We study systems modelled as graphs $G = (V,E)$ where each node $v \in V$ corresponds to a strategic agent. Each agent $v$ has a private (hidden) type $\theta(v)$. We will focus on \emph{single-dimensional} agents where the type is a single parameter $w(v)$, the weight of $v$. Following the \emph{revelation principle}, we assume that mechanisms are direct revelation mechanisms: each agent $v$ reveals a \emph{bid} $b(v)$ to the mechanism. The mapping from the true type of the agent to its bid is its \emph{strategy}. We use $b$ to denote the vector $(b(v) : v \in V)$ and use the standard shorthand $b_{-v}$ for the bid vector omitting the value of agent $v$, and $(x, b_{-v})$ for the vector equal to $b$, except the bid of $v$ has been switched to $x$. 

A mechanism is a game defined as follows. It has an allocation rule $M$ that, based on the bids $b$ and the structure of the graph $G$, computes an outcome $o$ from a set of \emph{feasible outcomes} $O$ (e.g.\ an independent set) and a payment $p_v \in \mathbb{R}$ for each agent $v$ (e.g.\ a price of joining for agents selected to an independent set). 

Each agent $v$ has a valuation function $u_v\colon O \to \mathbb{R}$ from a set of feasible outcomes $O$ to their valuations. In the example of modelling independent sets, the valuation $u_v(o) = w(v)$ if the agent is chosen in $o$ and 0 otherwise: the weight tells how much the agent gains from being selected. The \emph{utility} of $v$ is $U_v(o,p) = u_v(o) - p_v(o,b)$. In this work the goal of the mechanism will be to maximise the sum of the valuations of all agents (\emph{total welfare}), i.e.\ $\sum_{v \in V} u_v(o)$.

There are many possible solution concepts in mechanism design, but we will consider only pure \emph{Nash equilibria}~\cite{nash50equilibrium}. A mechanism is \emph{truthful} or \emph{incentive-compatible} if reporting the truth is always a (weakly) dominant strategy for the agents. That is, irrespective of $b_{-v}$, the agent $v$ maximises its utility by bidding $b(v) = w(v)$. This implies that truthful bidding is a Nash equilibrium in the game formed by the mechanism. When we say that a mechanism computes a solution, we mean that we assume that the agents bid truthfully (when truth is a weakly dominant strategy) and then the mechanism computes the solution corresponding to these bids.

\subsection{Local optimisation problems}

We study mechanism design in the context of distributed graph algorithms and will design mechanisms for graph problems on weighted graphs called \emph{local optimisation problems}. A solution to a local optimisation problem $f\colon V \to \Sigma$ assigns a label to each node from some alphabet $\Sigma$. The feasibility and quality of the solution are defined based on  local criteria as follows. First, these problems are \emph{locally checkable}~\cite{Naor1995}: there is a problem-specific constant $r$ such that each node can verify that the solution is correct in its $r$-neighbourhood, and the global solution is correct if and only if the solution is locally correct everywhere. That is, if the solution is not globally correct, there is at least one node that detects this in its local neighbourhood. Second, the quality of the solution can be represented as a sum of values over the nodes such that the value of each node depends only on its $r'$-neighbourhood, for some problem-specific constant $r'$. As an example, consider the \emph{weighted independent set} problem. To decide whether a marked set $I \subseteq V$ is independent, it is sufficient to see at distance $r=1$ from each node to verify that no two adjacent nodes are in $I$. To count the quality of the solution, it suffices to look at the labels of the nodes themselves (distance $r' = 0$) and sum the weights of the chosen nodes. When we return to the game-theoretic setting, this local valuation is the agent's valuation for the outcome represented by the solution $f$.

We will focus on one subset of local optimisation problems: \emph{binary maximisation and minimisation problems}. Here the task is to select a subset of nodes with respect to a locally checkable constraint maximising (or minimising) the total weight of the set.

For maximisation and minimisation problems, respectively, we say that a solution $f$ is an $\alpha$-approximation if 
\[
\sum_{v \in V} u_v(o^*) / \sum_{v \in V} u_v(o) \geq \alpha \text{ and } \sum_{v \in V} u_v(o) / \sum_{v \in V} u_v(o^*)  \leq \alpha
\] 
given an optimal outcome $o^*$ that maximises (minimises) the valuations of the agents.

We will assume that the values of the weight function $w$ come from the set $\{0,1,\dots,W\}$ for some known value $W$. In Section~\ref{sec:discretisation} we show how this assumption can be lifted (for a small tradeoff) by discretising a real-valued weight function.

\subsection{Distributed computing}

We study mechanism design in the standard distributed message passing models LOCAL and CONGEST~\cite{Peleg2000,Linial1992}. The system is modelled as a graph $G = (V,E)$ where each node $v \in V$ is a computational entity and each edge $\{u,v\} \in E$ is a communication link, allowing the exchange of messages between $u$ and $v$. Define $n = |V|$.

Each node is given a globally unique identifier in $[O(\operatorname{poly}(n))]$. Initially they only know their own identifier. The computation proceeds in synchronous rounds. In each round, nodes can send messages to their neighbours, receive messages from their neighbours, and update their state. There are no limitations on local computation. In the CONGEST model each message is limited to $O(\log n)$ bits, but in the LOCAL model there are no limits on the message size. In the CONGEST model we make the standard assumption that the input weights can be encoded with $O(\log n)$ bits.

At some point each node must stop and announce its own output. An algorithm has stopped once all nodes have stopped. The measure of complexity is the number of communication rounds. The running time of an algorithm is the worst-case round complexity over all graphs and identifier assignments. In this work we will study the complexity landscape when the maximum degree $\Delta$ of the input graph is small relative to the size of the network. 


\subsubsection{Implementing mechanisms in the distributed setting} \label{ssec:implementing-mechanisms}

We study the implementation of mechanisms as efficient distributed algorithms. We consider a general method of turning a (distributed) algorithm $A$ for a binary maximisation or minimisation problem (or more generally for a local optimisation problem) into a direct revelation mechanism: Given a weighted graph $G = (V,E,w)$, the weights $w$ are taken as the private types of the agents $V$. After the agents reveal their bids, the algorithm $A$ is used as the allocation rule $M$ for the mechanism, with the bids replacing private unknown weights. Finally, the the payments for all the agents (which are not part of the original algorithm $A$) are computed. How these payments are defined and computed is discussed in the next Section. We require that as output, each agent knows the part of the outcome that is needed to compute its own valuation and knows its payment.

For the computation of the allocation rule and the payments we study two variants of the model.
\begin{enumerate}
	\item In the \emph{trusted strategic LOCAL model} we assume the algorithm is run in a trusted environment: only the revelation of the inputs is strategic, and after this the system follows the algorithm faithfully. This model is used unless otherwise specified.
	\item In the \emph{consistent strategic LOCAL model} we also assume that the communication and computation is controlled by the strategic agents. We discuss this model in Section~\ref{sec:unreliable}, showing that some non-trivial mechanisms are implementable in this setting as well, but many are not.
\end{enumerate}
Corresponding variants of the CONGEST model are defined by adding the message size constraints to above models.


\subsection{Notation}

For a graph $G$ and a subset $S \subseteq V$ of nodes, let $G[S]$ denote the subgraph of $G$ induced by $S$. Two nodes $u$ and $v$ are \emph{neighbours} if $\{u,v\} \in E$. Let $N_G(v)$ denote the set of neighbours of $v$ in $G$ and $N^+_G(v) = N_G(v) \cup \{ v \}$. For a set $S \subseteq V$, define $N^+_G(S) = \cup_{s \in S} N^+_G(s)$.

\section{Local mechanisms and monotone optimisation} \label{sec:monotone-mechanism}

In this section we discuss the known connection between the existence of truthful auctions and monotone optimisation algorithms, and how this has a simple and natural extension to distributed computing. We also discuss how mechanisms arising from this connection are automatically able to compute the payments with the same asymptotic complexity in the LOCAL model.

Myerson's lemma states that in single-parameter environments (i.e.\ each agent has a type that is characterised by a single real number) only monotone algorithms yield truthful mechanisms and all such algorithms have a payment rule that makes them incentive-compatible as mechanisms. The mechanisms arising from binary maximisation and minimisation problems, as defined in the previous section, are covered by Myerson's Lemma.


An allocation rule $A$ is \emph{implementable} if there exists a rule for computing the payments such that the resulting mechanism is truthful. For maximisation problems, an allocation rule is \emph{monotone} if for each agent $v$, when other agents' bids remain constant, the outcome $o(v, (x,b_{-v}))$ is non-decreasing in $x$. For binary maximisation problems an algorithm $A$ is monotone if whenever a node $v$ is selected with bid $b(v)$, then it is also selected with any bid $b'(v) \geq b(v)$ (when the other bids remain constant). Similarly an algorithm for a binary minimisation problem is monotone if whenever a node $v$ is selected with bid $b(v)$, then it is also selected with any bid $b'(v) \leq b(v)$.

\begin{theorem}[Myerson's Lemma~\cite{myerson81optimal}] \label{thm:myerson}
	Consider a single-parameter environment.
	\begin{enumerate}
		\item An allocation rule $A$ is implementable if and only if it is monotone.
		\item There exists an explicit formula for the payment rule.
	\end{enumerate}
\end{theorem}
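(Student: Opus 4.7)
The plan is to follow Myerson's original argument, handling all three claims simultaneously by extracting necessary conditions from incentive compatibility and then showing those same conditions are sufficient via an explicit payment formula.

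For the necessity direction of (1), I would fix an agent $v$ along with the other bids $b_{-v}$, and abbreviate $A(z) := A_v(z, b_{-v})$ and $P(z) := p_v(z, b_{-v})$. Applying incentive compatibility at two values $y < z$ gives the two inequalities $y A(y) - P(y) \geq y A(z) - P(z)$ (true value $y$ does not want to misreport as $z$) and $z A(z) - P(z) \geq z A(y) - P(y)$ (true value $z$ does not want to misreport as $y$). Adding these two and cancelling yields $(z - y)(A(z) - A(y)) \geq 0$, which is exactly monotonicity of $A$ in $v$'s bid.

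For the sufficiency direction together with (2) and (3), I would start from a monotone allocation rule $A$ and define the payment by the Stieltjes integral
\begin{equation}
p_v(b) \;=\; \int_0^{b_v} z \, dA_v(z, b_{-v}),
\end{equation}
which is well-defined because $A_v(\cdot, b_{-v})$ is monotone and therefore of bounded variation. To check incentive compatibility, I would fix the true value $w$ of agent $v$ and view the utility as a function of the reported bid $x$, namely $U(x) = w A_v(x, b_{-v}) - p_v(x, b_{-v})$. Rewriting via integration by parts gives $U(x) = \int_0^x (w - z) \, dA_v(z, b_{-v})$, and since $A_v$ is non-decreasing, the integrand is non-negative for $z < w$ and non-positive for $z > w$; hence $U$ is maximised at $x = w$. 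Uniqueness up to an additive constant (normalised so that an agent never allocated pays $0$) follows by letting $y \to z$ in the IC inequalities, which determines $dP$ pointwise in terms of $dA$.

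The main subtlety is making the Stieltjes argument rigorous when $A_v$ has jumps, but for the binary problems targeted in this paper this collapses neatly. There $A_v \in \{0,1\}$, and monotonicity means there is a single threshold $t^\star(b_{-v}) = \inf\{ z : v \text{ selected at bid } z \}$. The integral reduces to $p_v(b) = t^\star(b_{-v})$ when $v$ is selected and $0$ otherwise, recovering exactly the \emph{critical price} formula used throughout the rest of the paper. So the only real obstacle, discontinuity of the allocation in general single-parameter settings, becomes trivial in our binary setting.
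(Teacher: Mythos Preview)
Your argument is the standard and correct proof of Myerson's Lemma, but note that the paper does not actually prove this theorem: it is stated with a citation to Myerson~\cite{myerson81optimal} and used as a black box. What the paper \emph{does} prove is the binary-problem corollary (Theorem~\ref{thm:monotone}), and there its argument matches exactly the specialisation you give in your last paragraph---a direct case analysis on whether the truthful bid lies above or below the single critical threshold $b^*(v)$. So your proposal is strictly more than the paper provides: you supply the full Stieltjes/integration-by-parts argument for general single-parameter allocations, whereas the paper only handles the $\{0,1\}$-valued case and defers the general statement to the literature.

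One minor point worth tightening: your integration-by-parts identity $U(x) = \int_0^x (w-z)\,dA_v(z,b_{-v})$ tacitly assumes $A_v(0,b_{-v}) = 0$; otherwise the two sides differ by the constant $w\,A_v(0,b_{-v})$. This does not affect the maximisation, but if you want the payment normalisation to come out cleanly (an agent bidding $0$ pays $0$) you should either state this assumption or absorb the constant explicitly.
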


For the special case of binary maximisation and minimisation problems, the payments are given by the \emph{critical prices}: $b^*(v)$ such that $v$ is selected with any bid $b(v) \geq b^*(v)$ and is not selected with any bid $b(v) < b^*(v)$. The existence of a critical price follows directly from the monotonicity of $A$. The critical price is used to define the payment function: in binary maximisation problems the payment is defined as $p(v) = b^*(v)$ if $v$ is selected and 0 otherwise, and in binary minimisation problems it is defined as $p(v) = -b^*(v)$ if $v$ is selected and 0 otherwise.

We show that any monotone distributed algorithm can be turned into a mechanism.
\begin{theorem} \label{thm:main}
	Let $A$ be a monotone distributed algorithm for a binary maximisation or minimisation problem $P$ in the LOCAL model. Then the mechanism that uses $A$ as the allocation rule and critical prices as payments is incentive compatible and can be computed with the same time complexity as $A$ in the trusted LOCAL model.
\end{theorem}

While the incentive compatibility follows from Myerson's Lemma, we give an independent proof for our special case. In the next section we show that the critical prices can also be computed with the same time complexity, completing the proof of Theorem~\ref{thm:main}.

\begin{lemma} \label{lem:monotone}
The mechanism from Theorem~\ref{thm:main} is incentive compatible.
\end{lemma}

\begin{proof}
Consider maximisation problems. Consider an arbitrary graph $G = (V,E)$ and some $v \in V$. First assume that truthful bidding $b(v) = w(v)$ results in $v \in S$. Since $A$ is monotone, we have that the critical price $b^*(v) \leq b(v)$ and $v \in S$ for all $b^*(v) \leq b'(v)$. Therefore all bids $b'(v) \geq b^*(v)$ result in the same utility $U_v(S,p) = w(v) - b^*(v)$. If $v$ bids $b'(v) < b^*(v)$ it is no longer selected by monotonicity. Its utility is $0$, which is at most the utility of the truthful bid. Conversely, if $w(v) < b^*(v)$, the utility of truth and any other bid $b'(v) < b^*(v)$ is 0. Bidding $b'(v) \geq b^*(v)$ results in total utility $w(v) - b^*(v) \leq 0$.
\end{proof}
The proof for minimisation problems is analogous.

\subsection{Critical price computation in the trusted LOCAL model}

A mechanism consists of two important parts: computing the assignment and computing the payments. Given a monotone distributed algorithm with running time $T(n,\Delta,W)$ for computing the outcome, the critical prices can be computed with the same time complexity in the LOCAL model.

\begin{lemma} \label{lem:local-payments}
	Given a $T(n,\Delta,W)$-time monotone distributed algorithm $A$ for a binary optimisation problem the critical prices can be computed with the same distributed time complexity.
\end{lemma}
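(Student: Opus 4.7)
The plan is to exploit the unbounded message size of the LOCAL model, which allows every node to collect its full $T(n,\Delta,W)$-hop neighbourhood in $T(n,\Delta,W)$ rounds by the standard gather procedure: in each round, each node sends to each neighbour everything it currently knows. After this phase, every node $v$ holds the topology of its $T$-neighbourhood together with the identifiers and the reported bids of every node inside it.

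Next, each $v$ performs a purely local simulation. Since $A$ runs in $T$ rounds, the output of $v$ under $A$ depends only on the data present in its $T$-hop neighbourhood, so simulating $A$ on that subgraph reproduces exactly the output that $A$ would produce on the whole graph. To obtain the critical price, $v$ fixes the observed bid vector $b_{-v}$ and simulates $A$ on inputs $(x, b_{-v})$ for values $x \in \{0,1,\dots,W\}$, recording for each $x$ whether the simulated outcome selects $v$. By the monotonicity hypothesis on $A$ (in the sense used in Theorem~\ref{thm:monotone}), the set of $x$ for which $v$ is selected is an upward-closed interval in the maximisation case and a downward-closed interval in the minimisation case; hence there is a unique threshold $b^*(v)$, which can be found either by a linear scan over the $W+1$ values or, more efficiently, by binary search in $O(\log W)$ simulations. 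All of this is local computation and therefore free in the LOCAL model. Each $v$ then outputs its selection status under the truthful bids together with $b^*(v)$, which determines its payment via the rule stated after Theorem~\ref{thm:myerson}.

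The total round complexity is $T(n,\Delta,W)$, dominated entirely by the gathering phase, matching the complexity of $A$. The only point that requires a moment of care — and the closest thing to an obstacle — is verifying that the simulation with input $(x, b_{-v})$ faithfully produces the same output that $A$ would on a graph where $v$ actually reported $x$; this follows because the $T$-hop neighbourhood of $v$ is a complete certificate for $v$'s output in any $T$-round algorithm, and the simulation only modifies the bid at $v$ itself. Everything else reduces to standard LOCAL bookkeeping.
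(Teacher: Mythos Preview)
Your proposal is correct and matches the paper's argument essentially verbatim: gather the $T(n,\Delta,W)$-hop neighbourhood, then locally simulate $A$ on inputs $(x,b_{-v})$ to locate the monotone threshold (the paper likewise notes the binary-search optimisation). There is nothing to add.
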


\begin{proof}
	By definition the output of a node depends only on its $T(n,\Delta,W)$-neighborhood. Each node can therefore gather the full $T(n,\Delta,W)$-neighborhood and simulate $A$ with all possible bids $b(v)$ to find the critical price $b^*(v)$.
\end{proof}

In the CONGEST model it is no longer feasible to collect the full neighbourhood to each node. This means that computing the payments becomes a non-trivial challenge. In the next section we present a mechanism for computing independent sets and show that in this case the critical prices can also be computed in the CONGEST model.

\section{A truthful mechanism for weighted independent set} \label{sec:independent-set}

We first present a mechanism for a fundamental task in distributed computing: finding an independent set of large weight. Weighted independent set models a scenario where each agent wants to take an action, and the graph represents the set of conflicts: neighbours cannot act in parallel.

The mechanism computes an independent set $I$. The weights $w(v)$ represent valuation for independent activation (agent is in $I$ and no neighbour in $I$). If an agent and its neighbour are both in $I$, or an agent is not in $I$, its valuation is 0. Recall that we assume that the weights are integers in $\{0,1,2,\dots,W\}$ for some known parameter $W$.

\begin{theorem} \label{thm:is-mechanism}
	There is an incentive-compatible distributed mechanism for the maximum weight independent set problem that computes a $\Delta$-approximation and runs in $O(\Delta W + \log^* n)$ rounds in the trusted CONGEST model.
\end{theorem}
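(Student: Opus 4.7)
The plan is to derive the mechanism from a distributed simulation of the sequential greedy for MWIS, verify monotonicity so that Theorem~\ref{thm:monotone} applies, and then compute the critical prices within the same round budget in the CONGEST model. Write $S$ for the independent set that the greedy outputs on $G$ and $S_v$ for the one it outputs on $G\setminus\{v\}$.

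\textbf{Algorithm and running time.} I would first compute a $(\Delta+1)$-colouring $\phi$ of $G$ in $O(\Delta+\log^* n)$ rounds via Linial's algorithm followed by standard colour reduction. Each node is then assigned the priority $(b(v),\phi(v))$, ordered lexicographically with larger priorities processed first. The greedy is simulated round by round in decreasing priority order: in the round indexed by $(w,c)$, every node $v$ with $b(v)=w$ and $\phi(v)=c$ checks whether any neighbour is already in $S$ and, if not, joins. Because two same-colour nodes are pairwise non-adjacent, the decisions at a given priority can all be made in one synchronous round using $O(1)$-bit messages. There are $(W+1)(\Delta+1)=O(\Delta W)$ priority pairs, so this phase costs $O(\Delta W)$ rounds and the total running time is $O(\Delta W+\log^* n)$. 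The $\Delta$-approximation is the classical guarantee of the greedy~\cite{sakai03note}.

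\textbf{Monotonicity.} Suppose $v$ is selected with bid $x$ and consider any $x'>x$. In the new run $v$'s priority strictly increases, so $v$ moves to an earlier round while every other node's priority is unchanged. The set of nodes processed strictly before $v$ in the new run is therefore a subset of those processed before $v$ originally, and these shared nodes make identical decisions in both runs since each decision depends only on strictly higher-priority nodes (which are processed the same way in both runs). At $v$'s new turn, the (fewer) higher-priority neighbours of $v$ have the same $S$-membership as in the original run; since none of them was in $S$ then, $v$ still joins. Combined with Theorem~\ref{thm:monotone}, this yields a truthful mechanism whose payments are the critical prices.

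\textbf{Critical prices in CONGEST.} This is the main obstacle, as Lemma~\ref{lem:local-payments} would require collecting each node's $O(\Delta W)$-hop neighbourhood. The plan is to exploit a local structural property of the greedy. For $v\in S$, the monotonicity argument implies that $v$'s selection in the alternative run where it bids $x\le b(v)$ is governed entirely by which neighbours of $v$ belong to $S_v$ at priorities above $(x,\phi(v))$: behaviour above $v$'s true priority is identical in both runs, and $v$ cannot affect nodes of lower priority. Hence $b^*(v)$ is, modulo the colour tiebreak, the priority of the highest-priority neighbour of $v$ contained in $S_v$. I would prove the following local characterisation: if $v$ is the only higher-priority neighbour of $u\in N(v)$ that lies in $S$, then $u\in S_v$, and the highest-priority element of $N(v)\cap S_v$ always satisfies this criterion. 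Given the characterisation, each node $u$ can decide locally from the main greedy pass whether it qualifies (it already knows its higher-priority $S$-neighbours) and, if so, send its priority to $v$ in one extra round using an $O(\log n)$-bit message; $v$ then outputs the maximum received priority, staying within CONGEST. The delicate part will be verifying the characterisation against cascades, where removing $v$ could in principle unblock a chain of nodes at intermediate priorities that are not neighbours of $v$. The key observation is that any such chain of status flips between $S$ and $S_v$ strictly increases in priority, is bounded above by $v$'s priority, and must therefore terminate at some node $z\in S_v\setminus S$; the same ``no other higher-priority $S$-neighbour'' reasoning applied to $z$ forces $v$ to be $z$'s unique higher-priority neighbour in $S$, making $z$ a neighbour of $v$ and a qualifying candidate of strictly greater priority than our supposed counterexample -- contradiction. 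This simultaneously proves both directions of the characterisation and hence that the ``maximum qualifying neighbour'' rule returns $b^*(v)$.
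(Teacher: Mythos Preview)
Your proposal is correct and matches the paper in all the structural pieces: the same greedy allocation with $(\Delta+1)$-colouring tie-breaking, the same monotonicity argument, and---crucially for the CONGEST claim---the same local test for the critical price (a neighbour $u$ of $v\in S$ is the relevant threshold iff $v$ is $u$'s unique higher-priority neighbour in $S$; the paper calls the other case ``blocked''). Where you diverge is in \emph{justifying} this test. The paper argues by inductive bid-lowering: if $v$ drops its bid by one level and every neighbour in that layer is blocked, then $v$ still joins $I$, so the execution is literally unchanged and one may recurse; the first unblocked layer is therefore the threshold. You instead compare against $S_v$, the greedy output on $G\setminus\{v\}$, observe that the run with bid $x$ agrees with $S_v$ above priority $(x,\phi(v))$, and use an alternating-chain argument on $S\bigtriangleup S_v$ to show that the highest-priority neighbour of $v$ in $S_v$ coincides with the highest-priority unblocked neighbour. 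Both routes yield the same two-round CONGEST computation; the paper's inductive argument never leaves the concrete executions being compared, while yours gives a cleaner global invariant at the cost of the cascade lemma. One small wording slip: ``$v$ cannot affect nodes of lower priority'' should be ``nodes of lower priority cannot affect $v$'s selection''.
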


The mechanism uses a simple greedy algorithm~\cite{sakai03note}. Compute a $(\Delta+1)$-colouring $\phi$ for tie-breaking, reveal the private inputs, and then repeat the following until all nodes have stopped: select the set of local maxima according to $<_{\phi}$, add them to the independent set, and remove them and their neighbours from the set of active nodes. We need to show that the algorithm has the claimed approximation ratio (Lemma~\ref{lem:is-approx}), the mechanism is monotone (Lemma~\ref{lem:greedy-is-monotone}), and that the payments can be computed fast (Lemma~\ref{lem:independent-set-property}).

\begin{lemma} \label{lem:is-approx}
	The greedy algorithm that repeatedly picks a maximal independent set of local maxima computes a $\Delta$-approximation of maximum weight independent set.
\end{lemma}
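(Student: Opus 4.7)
The plan is to use a standard charging argument. Fix an optimal independent set $I^*$ and let $I$ denote the greedy output. I will construct a map $c \colon I^* \to I$ such that every $v \in I$ receives charge from at most $\Delta$ nodes of $I^*$, and every $u \in c^{-1}(v)$ has $w(u) \leq w(v)$. Summing over $v$ then immediately gives $w(I^*) \leq \Delta \cdot w(I)$, which is exactly the required $\Delta$-approximation.

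For the map itself, I would set $c(u) = u$ whenever $u \in I^* \cap I$, and otherwise pick $c(u) = v$ for some neighbour $v \in I$ whose selection caused $u$ to be deactivated. Such a $v$ exists for every $u \in I^* \setminus I$: because the algorithm only terminates when the active graph is empty, $u$ is eventually deactivated, and since $u$ itself is never selected (it is not in $I$), the only remaining cause is the selection of some neighbour of $u$ in the round of deactivation.

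The two properties of $c$ then fall out easily. For the multiplicity bound, independence of $I^*$ forces $|c^{-1}(v)| \leq \Delta$: if $v \in I^*$ then $N(v) \cap I^* = \emptyset$, so $c^{-1}(v) = \{v\}$; if $v \notin I^*$ then $c^{-1}(v) \subseteq N(v) \cap I^*$, whose size is at most $\Delta$. For the weight inequality, take any $u \in c^{-1}(v)$ with $u \neq v$: in the round when $v$ was selected, $u$ was still an active neighbour of $v$, and $v$ was by construction a local maximum with respect to $<_\phi$, so in particular $w(u) \leq w(v)$. Combining gives
\[
w(I^*) \;=\; \sum_{v \in I}\, \sum_{u \in c^{-1}(v)} w(u) \;\leq\; \sum_{v \in I} \Delta \cdot w(v) \;=\; \Delta \cdot w(I).
\]

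I do not expect a real obstacle here: the argument is purely combinatorial, and the only delicate point is the justification that every $u \in I^* \setminus I$ is deactivated by the selection of a neighbour (as opposed to being deactivated in some other way), which is immediate from the algorithm's description. The tie-breaking rule $<_\phi$ plays no role beyond ensuring that being a local maximum gives an honest weight comparison $w(u) \leq w(v)$ against every active neighbour.
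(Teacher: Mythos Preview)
Your proof is correct and uses essentially the same charging idea as the paper: each node of $I^*$ is accounted for by the greedy node that removed it, and the local-maximum property plus the degree bound give the factor~$\Delta$. The only cosmetic difference is that the paper phrases the charge via the inequality $\alpha(G[N^+_{G_i}(v)]) \leq \Delta\, w(v)$ for each selected $v$ rather than building the explicit map $c$, but the underlying combinatorics is identical.
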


The idea for the proof can be found for example in Sakai, Togasaki, and Yamazaki~\cite{sakai03note}, Lemma~3.6. We must show that the distributed implementation works even when we pick many local maxima simultaneously. The proof is given in Appendix~\ref{app:is-apx-proof}.

\begin{lemma} \label{lem:greedy-is-monotone}
	The greedy weighted independent set algorithm is monotone.
\end{lemma}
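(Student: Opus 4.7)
The plan is to prove monotonicity by viewing the parallel local-maxima algorithm as equivalent to a serial greedy algorithm and then exploiting the locality of the selection rule. First I would establish that the iterative local-maxima procedure produces the same independent set as the \emph{serial} greedy that processes all nodes in decreasing order of $<_\phi$, selecting each node if and only if no already-processed neighbor has been selected. This equivalence is folklore and follows by induction on rank: for a node $u$, both algorithms select $u$ exactly when none of its strictly higher-valued neighbors is selected. In the parallel version $u$ can only become a local maximum once all its higher-valued neighbors have been removed (either selected into $I$, which rules $u$ out, or deleted as a neighbor of an $I$-node, which is consistent with $u$ being selected), matching the serial criterion under the inductive hypothesis.

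Now fix $v$ and assume $v \in I$ when its bid is $b(v)$. Consider the alternative execution with bid $b'(v) > b(v)$ and all other bids unchanged; let $\prec$ and $\prec'$ denote the serial processing orders in the two executions. Since only $v$ changes, and strictly upward, the relative order of all other nodes is identical in $\prec$ and $\prec'$; only $v$ itself moves earlier. Let $B_v = \{u \neq v : u \prec v\}$ and $B'_v = \{u \neq v : u \prec' v\}$; then $B'_v \subseteq B_v$.

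The key observation is that the serial selection of any node depends only on which of its already-processed neighbors have been selected. Since the restriction of $\prec$ to $B'_v$ coincides with the restriction of $\prec'$ to $B'_v$, a straightforward induction along this common prefix shows that every node in $B'_v$ has the same selection status in both executions. Because $v \in I$ in the old execution, no neighbor of $v$ lying in $B_v$ was selected; in particular no neighbor in $B'_v \subseteq B_v$ was selected, so no neighbor of $v$ processed before $v$ is in $I'$ either. Hence when $v$ is processed in the new execution, it is added to $I'$, establishing monotonicity.

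The main obstacle is the parallel-to-serial equivalence; once that is in hand, the monotonicity argument reduces to the simple observation that raising $v$'s bid only shrinks the set of nodes processed before $v$ while leaving their selections unchanged. If one preferred a direct argument on the parallel algorithm, the same idea can be phrased as the round-by-round invariant that, prior to $v$'s selection, the set of nodes selected so far in the new execution is contained in the set of nodes selected so far in the old execution restricted to $B'_v$, but the serial view makes the bookkeeping considerably cleaner.
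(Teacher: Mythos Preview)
Your proof is correct but proceeds differently from the paper. The paper argues directly on the parallel algorithm: assuming $v$ is selected in step~$i$ with bid $b(v)$, it observes (with the step-by-step induction left implicit) that in the alternative execution with bid $b'(v)>b(v)$, for each step $j<i$ either $v$ is already selected or the set of nodes selected in that step is unchanged---because no neighbour of $v$ was a local maximum in the original run, and raising $v$'s bid can only make this more true. Hence $v$ is selected no later than step~$i$. You instead first reduce the parallel local-maxima procedure to the equivalent serial greedy and then run a prefix argument: raising $v$'s bid only moves $v$ earlier in the serial order, the nodes processed before $v$ in the new run are an initial segment (in the common order) of those processed before $v$ in the old run and therefore receive identical decisions, and since none of $v$'s neighbours in the larger old prefix were selected, none in the smaller new prefix are either, so $v$ is again selected. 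The serial detour costs you a (folklore) equivalence lemma but makes the monotonicity step essentially a one-liner and generalises immediately to any local-maximum-style greedy; the paper's direct argument is shorter on the page but leaves the round-by-round invariant to the reader. Your closing paragraph already acknowledges that the two viewpoints are interchangeable.
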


\begin{proof}
	Assume $v$ bidding $b(v)$ is selected in the $i$th step of picking local maxima. Consider an alternative execution where $v$ bids $b'(v) > b(v)$ and other bids remain constant. In each step $j = 1, \dots, i-1$ one of two things can happen: either no agent in $N^+(v)$ is selected or $v$ is selected. Since we know that none of the neighbours are picked with bid $b(v)$, they are never picked with the larger bid, as they cannot become local maxima. This implies that $v$ is either picked earlier or picked in round $i$ when it is a local maximum.
\end{proof}

\subsection{Computing critical prices with small messages}

We claim that to compute the critical price $b^*(v)$ the following suffices: each agent $v \in I$ arranges its neighbours with smaller bids in the descending order $u_1, \dots, u_k$ according to $<_{\phi}$. Each neighbour $u_i$ sends the following information: does it have a neighbour $x \neq v$ such that $b(x) >_{\phi} b(u_i)$ and $x \in I$? Agent $v$ computes the first neighbour $u_i = u^*(v)$ such that $u^*(v)$ \emph{does not} have such a neighbour: $b^*(v) = b(u^*(v))$.

\begin{lemma} \label{lem:independent-set-property}
	In the greedy weighted indepedendent set mechanism, the critical price of agent $v$ is determined by the highest bidding neighbor $u$ such that if $b(u) <_{\phi} b(v)$ we have $v \in I$ and $u$ has no other neighbor $w$ with $b(w) >_{\phi} b(u)$ and $w \in I$.
\end{lemma}

\begin{proof}
	The greedy algorithm repeatedly picks the local maxima of $<_{\phi}$ and therefore the output of an agent depends exactly on the output of its neighbors $u$ with $b(u) >_{\phi} b(v)$. Let $G_{b,\phi} = (V,E_{b,\phi})$ denote a directed graph constructed as follows: for each edge $\{u,v\} \in E$, the dependency graph has a directed edge $(u,v) \in E_{b,\phi}$ if and only if $b(u) <_{\phi} b(v)$. We say that $v$ is a \emph{successor} of $u$ if there is a directed path from $u$ to $v$ in $G$, and $u$ is a \emph{predecessor} of $v$. A node $u \in N(v)$ is a closest (direct) predecessor of $v$ if it is a predecessor of $v$ and there is no predecessor $u'$ of $v$ in $N(v)$ with $b(v) >_{\phi} b(u') >_{\phi} b(u)$.  
	
	First, assume that given some bid $b(v)$ we have $v \in I$ and let $u \in N(v)$ be a closest direct predecessor of $v$. Now if $u$ has a neighbor $w \in I$ with $b(w) >_{\phi} b(u)$, then $w$ is not a predecessor of $v$. This implies that if $v$ bids $b'(v) <_{\phi} b(u)$ this does not affect the output of $w$ and we have $u \notin I$. Conversely, assume again that $u$ is a closest direct predecessor of $v$, but there is no $w \in I \cap N(u)$ with $b(w) >_{\phi} b(u)$. No successor of $u$ is a predecessor of $v$, so if $v$ bids $b'(v) <_{\phi} b(u)$ instead, then all successors of $u$ output $\notin I$ and we have $u \in I$.    
\end{proof}

\subsection{Proof of Theorem~\ref{thm:is-mechanism}}

\begin{proof}
	By Lemma~\ref{lem:is-approx} the mechanism correctly computes a $\Delta$-approximation. By Lemma~\ref{lem:greedy-is-monotone} the mechanism is monotone, so from Myerson's Lemma it follows that it is truthful.
	
	The algorithm consists of computing a $(\Delta+1)$-coloring, repeatedly picking local maxima until all agents have stopped, and computing the payments. A $(\Delta+1)$-coloring can be computed in $O(\Delta + \log^* n)$ rounds~\cite{maus20linial}. There are $(\Delta+1)W$ weight classes in total (initially $W$ classes in the input and each is subdivided into at most $(\Delta+1)$ classes by the tie-breaking) and in each step all agents who are local maxima are eliminated, so after $O(\Delta W)$ rounds all agents have been eliminated. After the algorithm has stopped, each agent $v \notin I$ can inform its neighbors in $I$ whether they have another larger neighbor in $I$. This allows each $u \in I$ to compute its critical price by Lemma~\ref{lem:independent-set-property}.
\end{proof}

\subsection{Optimality}

The worst-case guarantee of the greedy algorithm is factor $O(\log \Delta)$ from optimum among distributed algorithms with running time~$o(\log n)$, even in the unweighted case. This is because there exist $d$-regular graphs of logarithmic girth with no independent sets of size $\omega(n \log d / d)$~(see e.g.\ \cite{alon10constant}, Theorem 1.2). These graphs are indistinguishable from bipartite graphs of logarithmic girth, and therefore an $o(\log n)$-time distributed algorithm cannot compute a solution of size $\omega(n \log d / d)$ in the latter.


\section{Mechanisms with unreliable communication} \label{sec:unreliable}

In this section we study what can be done if we remove the assumption of reliable communication and algorithm execution. That is, instead of the agents only controlling the information that they reveal, they also run the algorithm itself, including choosing what messages to send.

The fundamental difficulty in such a model is that agents can no longer trust any information about the network that is relayed by its neighbours. This makes algorithm design very difficult. In Section~\ref{ssec:im-dep} we formalise this observation and characterise a class of algorithms that are not truthful in the strategic communication and computation setting.

Despite this impossibility result it is possible to design non-trivial algorithms. In Section~\ref{ssec:alg-sc-mwis} we present a strategy-proof mechanism for weighted independent set approximation. This mechanism uses a different primitive for tie-breaking compared to the mechanism from Section~\ref{sec:independent-set}: it is slower, uses more colours, and requires the use of randomness.

Distributed algorithms in the strategic setting require a weaker solution concept than dominant strategies: if all other agents deviate, it is typically the case that a single agent should also deviate. The natural concept is \emph{ex-post Nash equilibrium}: that is, Nash equilibria assuming we know the strategies of the other agents, but not their private information (valuations).

\subsection{Modelling strategic communication and computation}

In this section we study the setting where the strategic agents are responsible for running the mechanism. Instead of following a fixed algorithm, in each step each agent can perform arbitrary computational steps and send arbitrary messages. There is no cost to these steps and the utility is computed based on the outcome and the payment. We call any step where the agent does not follow the intended algorithm a \emph{deviation}.

\textbf{Consistency.} In the context of mechanism design, the idea is that there is an entity, representing the world outside the mechanism, that collects or makes payments to the agents. We assume that these payments are made in a way that is verifiable and model it by assuming that after the execution, the payment of an agent is consistently visible to all of its neighbours. Similarly, we assume the outcome is an action that is also consistently visible to all neighbours of each agent. This means that an agent cannot lie inconsistently about its computed outcome: all lies must result in the same outcome and payment.

\textbf{Detecting deviations.} While agents can lie, we assume that if their lies are detected, other agents can always report this to the mechanism. By a detectable deviation we mean that there is provable deviation from the algorithm by a specific agent based on the information received by a single agent. If such a deviation is detected, we assume all cooperating agents are guaranteed a payment that incentivises them to report, and all agents that deviate and are detected receive a penalty that always disincentivises them. We further assume that agents are unable to forge evidence of a detectable deviations, e.g.\ through the use of signed messages. Thus we will assume that agents only lie in ways that are undetectable. As an example, agents will not change their announced outputs, as this would be easy to detect. However, agents may e.g.\ lie about their private inputs, and even give different lies to different agents, as long as these agents cannot compare the relevant messages.

\textbf{Prefer cooperation.} We assume that if all things are otherwise equal, agents will always prefer cooperation. That is, if following the algorithm gives the same utility as some deviation, then the agent will choose to follow the algorithm.

\textbf{Ex-post Nash equilibrium.} Following the algorithm is a (weak) ex-post Nash equilibrium if each agent maximises its utility by revealing its true valuations and following the algorithm assuming the other agents do so. This is a weaker notion than the dominant-strategy incentive compatibility used in the previous sections, but even under this notion the agents do not need to know anything about the private information of the other agents, only that they are following the algorithm without deviations.  We refer to the book by Nisan et al.\ for a more formal exposition on ex-post Nash equilibria~\cite[Chapter 9]{nisan07algorithmic}.

\subsection{A mechanism for weighted independent set with strategic communication} \label{ssec:alg-sc-mwis}

In this section we show that the independent set algorithm from Section~\ref{sec:independent-set} can be adapted to work in the consistent LOCAL model. Recall that the algorithm consists of two phases: in the first phase, a tie-breaking coloring $\phi$ is computed. Then in the second phase agents make their bids and an independent set is computed using the simple greedy local maxima selection.

The main challenge in adapting the algorithm is the tie-breaking. If we use an arbitrary coloring algorithm, the agents could manipulate it in complicated ways. This could easily lead to situations where an agent could gain an advantage. For example, suppose agent $v$ has $w(v) < w(u)$ for some neighbour $u$, and $u$ in turn has neighbour $u'$ with $w(u') = w(u)$. Further, suppose $\phi$ decides which of $u$ and $u'$ joins $I$. Now if $v$ can manipulate $\phi$ so that $u'$ joins it can also potentially join $I$ and improve its utility.

To avoid this, we need to design a subroutine where the choices of single agents do not affect the colours chosen by other agents. The key idea is to do pairwise random trials between neighbours where each agent generates a random number by choosing a bit string $X$ u.a.r., receiving another string $Y$ from its neighbour, and taking the bitwise exclusive OR $Z = X \oplus Y$ as its input. To prevent an agent manipulating its bit, each agent must send both its private string $X$ and its string intended for its neighbour, $Y$, over to the neighbour in the initial message. This allows both nodes to know both resulting strings. We say that a node $v$ wins its trial with $u$ if $Z_v > Z_u$. We say that a node wins in a round if it wins all of its neighbours.

Let $r$ be a positive integer parameter. While there is an active neighbour, for steps $i = 0,1,2,\dots$ each node $v$ repeats the following.
\begin{enumerate}
	\item Round 1: For each neighbour $u$, choose two $r$-bit strings $X_{v,u}$ and $Y_{v,u}$ uniformly at random. Send $X_{v,u}$ and $Y_{v,u}$ to $u$. For each neighbour $u$, receive $X_{u,v}$ and $Y_{u,v}$ and compute $Z_{v,u} = X_{v,u} \oplus Y_{u,v}$ and $Z_{u,v} = X_{u,v} \oplus Y_{v,u}$.
	\item Round 2: If $v$ is active and for all $u \in N(v): Z_{v,u} > Z_{u,v}$, set $\phi(v) = i$. Set status as inactive and send a message to each neighbour. If all neighbours are inactive stop and set $\phi(v) = i$.
\end{enumerate}

\begin{lemma} \label{lem:pairwise-invariant}
	For any fixed agent $v$, its non-detectable deviating actions can only affect its own outcome. 
\end{lemma}

\begin{proof}
	Here we use the assumption that the other agents are following the algorithm. Since each neighbour $u$ chooses its bits randomly, by taking $\oplus$ we ensure that both $Z_{u,v}$ and $Z_{v,u}$ are uniform random strings no matter what strings $v$ chooses. In each step, each neighbour $u$ wins the trial with probability $(1-\varepsilon)/2$ for any choice of $X_{v,u}$ and $Y_{v,u}$, where $\varepsilon = \varepsilon(r)$ is the probability of the bit strings being equal. This implies that in each step, each agent $u$ has probability $((1-\varepsilon)/2)^{\deg(v)}$ of winning, again independent of the choices of the bits of $v$. Finally, since the colour chosen by $u$ only depends on the step $i$ at which it wins for the first time, it is also not dependent on the choices of $v$.
	
	If $v$ deviates by not choosing its bits randomly, this does not affect the probabilities of other agents. If $v$ tries to announce itself as a winner in a step that it did not win, at least one other agent will detect the deviation. Since the communication in the algorithm consists of sending bit strings and possibly announcing a win, all other deviations are detected. 
\end{proof}

\begin{lemma} \label{lem:pairwise-correct}
	For a constant maximum degree $\Delta$, the pairwise trial algorithm computes a proper colouring with $O(\log n)$ colours in $O(\log n)$ rounds with high probability.
\end{lemma}

\begin{proof}
	In each step $i=0,1,2,\dots$ the winning agents form an independent set, and therefore the colors are always proper. For a constant $\Delta$, each agent has a constant probability of winning in each step (step consists of two rounds), and thus wins at least once in $O(\log n)$ rounds w.h.p. By a union bound all agents win in $O(\log n)$ rounds w.h.p.
\end{proof}

\begin{theorem} \label{thm:consistent-is}
	There exists a mechanism that computes a $\Delta$-approximation of maximum-weight independent set in $O(W \log n)$ rounds when $\Delta$ is constant. Following the mechanism is a weak ex-post Nash equilibrium in the consistent strategic LOCAL model.
\end{theorem}

\begin{proof}
	By assumption, detectable deviations are never preferable. Therefore it remains to show that no unilateral undetectable deviation can result in a gain in utility, assuming other agents follow the algorithm without deviation.
	
	From Lemma~\ref{lem:pairwise-invariant} we know that the agents cannot manipulate the tie-breaker $\phi$ beyond (some) manipulations to their own color. We will show that this and deviations in the greedy selection algorithm cannot improve utility. In the greedy algorithm each agent initially announces its bid (can deviate by announcing different values to different neighbours). Then in each round the local maxima announce themselves and stop. Then the neighbours of those local maxima announce that they are not in the set and stop. Any communication outside these messages is a deviation.
	
	We will show that deviations do not help using a case analysis based on the outcome of non-deviating behavior. Recall from Lemma~\ref{lem:independent-set-property} that the output of agent $v$ in the greedy algorithm only depends its successors, i.e. other agents connected to it by directed paths of increasing bids.
	\begin{enumerate}
		\item Case $w(v) <_{\phi} b^*(v)$: Let $u$ be one of the critical neighbors of $v$. The only way to join $I$ and improve utility is to lie such that $u$ is no longer critical: that is, agent $v$ must lie so that some successor $u'$ of $u$ changes its output from $u' \notin I$ to $u' \in I$. To affect the output of an agent, $v$ has to affect that agent or one of its successors. But since $u$ is a critical neighbor, for all successors $w$ of $u' \in N(v)$ it holds that $w \notin I$. Therefore increasing $b(v)$ and $\phi(v)$ does not change the output of these agents, and no lie that causes $u$ to become non-critical exists.
		
		\item Case $w(v) >_{\phi} b^*(v)$: Let $u$ denote the critical neighbour of $v$; if there is no critical neighbour, $v$ cannot improve its utility in any way. By definition, if $v$ changes $\phi(v)$ or $b(v)$ so that $b(v) <_{\phi} b(u)$ then $v \notin I$ and its utility is not improved. The other option is to lie to some other agent to cause a change in the output so that $u$ becomes non-critical. Since $v \in I$ no lie upwards will change the outputs of its neighbours $u' \in N(v)$, as they all have $u' \notin I$. If $v$ lies to some neighbour $u'$ so that it changes to $u' \in I$ then by consistency we must have $v \notin I$, i.e. no gain in utility.
	\end{enumerate}	
	
	Once agent and its neighbors have stopped, each agent knows the bids and outputs of its neighbors. Each agent $u \notin I$ sends a message to each of its neighbors $v \in I$, indicating if it has another neighbor $w \in I$ with $b(w) >_{\phi} b(u)$. Agents have no incentive to lie, as this only affects the payments of other agents. Each agent $v \in I$ can then decide its critical price and payment, as shown in Lemma~\ref{lem:independent-set-property}. If an agent $v \notin I$ that was not chosen tries to output $v \in I$ by the maximality of the independent set there is a neighbor that detects a deviation. If it tries to announce a payment $b(v) <_{\phi} b^*(v)$, then, by consistency, it has to announce the same value to all neighbors, and by definition there is a neighbor $u$ that detects it could have joined $I$ if $v$ had bid $b(v)$.
\end{proof}

\subsection{Impossibility of truthfulness with dependent information} \label{ssec:im-dep}

We will finish this section by showing that the algorithm from the previous section has the structure that it does for a reason: any algorithm that is flexible enough and uses information relayed by other agents cannot be an ex-post Nash equilibrium.

We observe that if an agent $v$ can feed different lies to its neighbors such that all neighbors believe the utility of $v$ should be higher than in the real system, then it has incentive to present these lies. 

Consider input-labelled graphs with unique identifiers and $b$ denoting bids of the agents. Let $\mathcal{G}$ denote some family input-labelled graphs.

For a graph $G$ and an edge $\{u,v \}$ we say that the \emph{$(u,v)$-restriction of $G$ at $v$} is the connected component of $u$ in $G \setminus \{v\}$. This naturally induces restricted input functions such as $b$ and the identifiers. These represent the information that $u$ could potentially receive without being relayed by $v$. Conversely, given $G'$, a $(u,v)$-restriction of a $G \in \mathcal{G}$ at $u$, a \emph{$(u,v)$-extension of $G'$ at $u$} is any $G'' \in \mathcal{G}$ such that $G''$ can be constructed by taking some input-labelled graph $B$ containing $v$ and adding the edge $\{ u,v \}$ s.t. all input labels agree on $G' \cup \{v \}$ (i.e. the difference is in the part hidden "behind" $v$).

We say that an algorithm $A$ in a family of graphs $\mathcal{G}$ is \emph{cross-dependent}, if the following holds. There exists a graph $G \in \mathcal{G}$ and a node $v \in G$ such that
\begin{enumerate}
	\item for every $u,u' \in N(v)$ the $(u,v)$-restriction and the $(u',v)$-restriction of $G$ at $v$ have an empty intersection, and
	\item for all $u \in N(v)$ there exists an $(u,v)$-extension of the $(u,v)$-restriction of $G$ at $v$ such that all extensions agree on the outcome of $v$ under $A$, all extensions agree on the payment of $v$ under $A$, and the utility of $v$ is strictly higher than in $G$.
\end{enumerate} 

\begin{theorem}
	If a mechanism $A$ is cross-dependent on an input-labelled family of graphs $\mathcal{G}$ then it is not an ex-post Nash equilibrium in the strategic LOCAL model.
\end{theorem}

\begin{proof}
	From the definition it follows that there exists a graph $G$ where an agent $v$ can feed its neighbours undetectable lies about its neighbourhood, given by the extensions, such that the algorithm $A$ will produce an outcome that is preferable to $v$ over the outcome the algorithm would have produced on $G$ without deviations.
\end{proof}

Note that the general form of the result relies on the agent potentially predicting what kind of information its neighbors will reveal, i.e. a worst-case analysis. Below we give an example of a simple algorithm where the lie is much easier to construct.

\subsubsection{Examples}

This characterisation captures a large class of algorithms. To illustrate it, we will give two examples below. It should be noted that the reason that our mechnism for weighted independent set avoids this impossibility is because all computation in it is based on pairwise comparisons instead of more general properties of agents' neighborhoods.

\paragraph{Example 1: Degree-dependent score function for greedy weighted independent set.} In the previous section we presented a variant of the greedy mechanism for weighted independent set that is an ex-post Nash equilibrium. In principle we could use a different function, such as the weight of the agent relative to its degree $w(v)/\deg(v)$~\cite{sakai03note} (we will discuss a use case for such a scoring function in Section~\ref{sec:dominating-set}). However, this mechanism is vulnerable to a simple lie: assuming that the different incident restrictions of an agent $v$ do not overlap, an agent can tell its neighbors that it has a very low degree, boosting its score $w(v)/\deg(v)$. This allows it to have a lower critical price.

\paragraph{Example: The VCG-mechanism.} The Vickrey-Clarke-Groves mechanism is one of the seminal results in mechanism design. It states that by computing the optimum solution and setting the payment of each agent $v$ as its \emph{externality}, i.e.\ the amount the utility of the other agents could be improved if we did not count $v$. It can be applied to essentially any optimisation problem, but is in practice computationally infeasible due to the requirement of computing the optimal solution to hard optimisation problems. If we consider the weighted independent set as an example, a focal agent $v$ can lie to each neighbor $u$ that in its $(u,v)$-extension choosing $v$ is "cheap" (there does not exist a heavy independent set that does not include $v$). This implies that the payment of $v$ is small.


\section{Mechanisms in the trusted LOCAL model} \label{sec:additional-mechanisms}

In this section we present three additional mechanisms for optimisation problems in the trusted LOCAL model. These mechanisms are examples of different algorithmic techniques that are still monotone as required by the Myerson's Lemma. Here $n$ denotes the number of nodes, $\Delta$ the maximum degree, and $W$ the number of possible private values.
\begin{enumerate}
	\item Minimum-weight vertex cover (MWVC): we give a mechanism for 2-approximation that runs in $O(\Delta + \log^* n)$ rounds in LOCAL model (Theorem~\ref{thm:vertex-cover-mechanism}). It is based on the local ratio algorithm~\cite{baryehuda81linear} which has no dependency on $W$ in the running time.
	\item Minimum-weight dominating set (MWDS): we present a mechanism that computes a $(1 + \ln (\Delta+1))$-approximation and runs in $O(\Delta^3 W + \log^* n)$ rounds in the LOCAL model (Theorem~\ref{thm:dominating-set-mechanism}). It is based on the greedy set cover algorithm~\cite{johnson74approximation,chvatal79greedy}.
	\item Slot assignment (a type of weighted graph colouring): we present a mechanism that computes an $O(\Delta)$-approximation and runs in $O(\Delta W + \log^* n)$ rounds (Theorem~\ref{thm:slot-auction-mechanism}). It uses a simple greedy algorithm. This mechanism showcases the application of Myerson's Lemma to problems more general than selecting a subset of nodes.
\end{enumerate}

\subsection{Mechanism for minimum-weight vertex cover} \label{sec:vertex-cover-mechanism}

In this section we study the weighted vertex cover problem. Given a weighted graph $G = (V,E,w)$ the task is to find a subset $C \subseteq V$ of nodes with small weight such that every edge has at least one endpoint in $C$. We show the following theorem.

\begin{theorem} \label{thm:vertex-cover-mechanism}
	There is an truthful distributed mechanism that computes a 2-approximation of minimum-weight vertex cover in $O(\Delta + \log^* n)$ rounds in the LOCAL model.
\end{theorem}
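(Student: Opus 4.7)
The plan is to distribute the local ratio algorithm of Bar-Yehuda and Even~\cite{baryehuda81linear}, show that it is monotone in each agent's bid, and then apply Theorem~\ref{thm:monotone} together with Lemma~\ref{lem:local-payments}. Each node $v$ maintains a residual $r(v)$ initially set to its bid $b(v)$. Edges are processed in a fixed order: when an edge $e=\{u,v\}$ is processed and neither endpoint is in the cover $C$, we set $\delta_e=\min(r(u),r(v))$, subtract $\delta_e$ from both endpoints' residuals, and add to $C$ every endpoint whose residual has reached $0$. The standard LP-dual charging argument shows $w(C)\leq 2\sum_e \delta_e\leq 2\cdot\mathrm{OPT}$, so this is a $2$-approximation.

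For the distributed implementation, I would precompute two objects in $O(\Delta+\log^* n)$ rounds: a proper edge colouring with $O(\Delta)$ colours, obtained via a Barenboim--Elkin-type construction, and a $(\Delta+1)$-vertex colouring $\phi$ used to break ties in the $\min$ as in Section~\ref{ssec:tie-breaking}. The algorithm then processes the edge-colour classes one by one. A colour class is a matching, so all of its edges can be processed simultaneously in a single round of communication: the two endpoints of each such edge exchange residuals, compute the same $\delta_e$, update, and announce membership in $C$. The total round complexity is $O(\Delta+\log^* n)$ and, importantly, does not depend on $W$.

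The main obstacle is proving monotonicity: if $v\in C$ under bid $b(v)$, then $v\in C$ under any bid $b'(v)\leq b(v)$ with the other bids unchanged. I would couple the two executions in the common edge-processing order. Let $e_{i_1},\dots,e_{i_k}$ be the edges incident to $v$ processed before $v$ enters $C$ in Run~1, with other endpoints $u_1,\dots,u_k$. By induction on $j$, at step $e_{i_j}$ one of two things happens: either (Case~A) Run~2 takes the same $\delta$ as Run~1, so $u_j$ enters $C$ in both runs and the two executions continue to agree on every node other than $v$, whose residual in Run~2 is exactly $b(v)-b'(v)$ smaller; or (Case~B) the $\min$ in Run~2 is dominated by $v$'s smaller residual, in which case $v$ enters $C$ in Run~2 at step $e_{i_j}$. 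If Case~A holds for all $j<k$, then the Run~1 inequality $r_{i_k-1}(v)\leq r_{i_k-1}(u_k)$ together with $r'_{i_k-1}(u_k)=r_{i_k-1}(u_k)$ and $r'_{i_k-1}(v)=r_{i_k-1}(v)-(b(v)-b'(v))$ shows that Case~B is triggered at $e_{i_k}$, so $v$ enters $C$ in Run~2 by step $e_{i_k}$ at the latest. With monotonicity in hand, Theorem~\ref{thm:monotone} yields an incentive-compatible mechanism with critical prices as payments, and Lemma~\ref{lem:local-payments} guarantees that these prices are computable in the same $O(\Delta+\log^* n)$ rounds in LOCAL.
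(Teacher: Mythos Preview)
Your proposal is correct and follows essentially the same route as the paper: distribute Bar-Yehuda--Even local ratio by processing edge-colour classes (the paper colours the line graph with $2\Delta-1$ colours; you use an $O(\Delta)$-edge-colouring, which is equivalent), prove monotonicity by coupling the two runs, and then invoke Theorem~\ref{thm:monotone} and Lemma~\ref{lem:local-payments}. Your monotonicity argument is in fact more carefully spelled out than the paper's, which simply observes that the edge order is weight-independent and that if $t(v)$ reaches $0$ with bid $b(v)$ it does so no later with a smaller bid; your explicit Case~A/Case~B induction makes the coupling precise.

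One small point: the extra $(\Delta+1)$-vertex-colouring $\phi$ ``to break ties in the $\min$'' is unnecessary here and the paper does not use it. In the local ratio step, if $r(u)=r(v)$ then $\delta_e$ equals that common value and \emph{both} residuals drop to $0$, so both endpoints enter $C$; there is nothing to break. Introducing a tie-breaker that lets only one endpoint enter $C$ would actually complicate both the $2$-approximation charging and your monotonicity coupling (a node with residual $0$ that is kept out of $C$ behaves oddly in later steps). Dropping $\phi$ aligns you exactly with the paper's argument and removes a potential source of confusion.
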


The mechanism uses the monotone local ratio algorithm for vertex cover due to Bar-Yehuda and Even~\cite{baryehuda81linear}. We require that each edge is covered. The local ratio algorithm also works for the \emph{prize-collecting vertex cover} problem where edges have a weight that must be paid if they are left uncovered~\cite{baryehuda04local}. The valuation is $-w(v)$ for each selected agent $v$ and 0 for each non-selected agent.

The following algorithm computes a 2-approximation of minimum weight vertex cover:
\begin{enumerate}[noitemsep]
	\item Initialise as follows: for each edge $e$ set $c(e) = 0$. For each node set $t(v) \leftarrow w(v)$. Let $C = \emptyset$ denote the vertex cover.
	\item Colour the line graph properly with $2\Delta-1$ colours.
	\item Proceed by colour classes of the line graph. At step $i$, for each uncovered edge $e = \{u,v\}$ with colour $i$, let $m(e) = \min \{ t(u), t(v) \}$. Set $t(u) = t(u) - m(e)$, $t(v) = t(v) - m(e)$, and $c(e) = m(e)$. Add all $v$ with $t(v) = 0$ to $C$.
\end{enumerate}

\begin{lemma}
	The local ratio algorithm for vertex cover is monotone.
\end{lemma}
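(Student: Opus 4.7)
The plan is to compare, side by side, two parallel executions of the algorithm on the same graph: execution $\mathcal{E}$ where $v$ bids $b(v)$, and execution $\mathcal{E}'$ where $v$ bids $b'(v) < b(v)$, with all other bids identical. Writing $\delta = b(v) - b'(v) > 0$, I would establish the following coupling invariant by induction on the number of edges processed in line-graph colour-class order: at every point in the algorithm, either $v$ has already been added to $C$ in $\mathcal{E}'$, or the two executions have identical $C$-membership, identical residual values $t(u) = t'(u)$ for every $u \neq v$, and $t'(v) = t(v) - \delta$.

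For the inductive step, edges not incident to $v$ are immediate: by the hypothesis both endpoints have matching $t$ and $C$-status in the two executions, so the step is executed identically and the invariant is preserved. The only interesting case is an edge $e = \{v, u\}$; assuming $v \notin C$ in $\mathcal{E}'$, the hypothesis gives $t'(v) = t(v) - \delta$, $t'(u) = t(u)$, and $e$ uncovered in both executions. I would then split on $\min\{t(v), t(u)\}$. If $t(v) \le t(u)$, then $v$ enters $C$ in $\mathcal{E}$; moreover $t'(v) = t(v) - \delta \le t(u) = t'(u)$, so $v$ also enters $C$ in $\mathcal{E}'$ and the invariant is vacuously maintained. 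If $t(v) > t(u)$, then $u$ enters $C$ in $\mathcal{E}$; in $\mathcal{E}'$, either $t'(v) \le t(u)$ and $v$ enters $C$ in $\mathcal{E}'$ (and we are done), or $t'(v) > t(u)$, in which case $u$ still enters $C$ in $\mathcal{E}'$, both $t(v)$ and $t'(v)$ drop by exactly $t(u)$, and the invariant $t'(v) = t(v) - \delta$ survives alongside $t'(u) = t(u) = 0$.

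To conclude, once all edges have been processed, suppose $v \in C$ in $\mathcal{E}$ but, for contradiction, $v \notin C$ in $\mathcal{E}'$. The invariant then forces $t'(v) = t(v) - \delta = 0 - \delta < 0$, contradicting the fact that tightness values stay non-negative. Hence $v \in C$ in $\mathcal{E}'$ as well, which is precisely the monotonicity condition for a binary minimisation problem.

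I expect the main delicate point is the surviving sub-case where $t(v) > t(u)$ and $t'(v) > t(u)$: one must verify that $u$'s tightness reaches zero in both executions (so $u$'s subsequent interactions with non-$v$ neighbours proceed identically) while the gap $\delta$ on $v$ is preserved to the unit. The proper line-graph colouring is what lets this inductive bookkeeping go through cleanly, since no two edges incident to $v$ are processed in the same colour class and therefore updates to $t(v)$ and $t'(v)$ never occur simultaneously from different neighbours.
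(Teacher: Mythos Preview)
Your argument is correct and follows essentially the same idea as the paper's proof: exploit that the edge-processing order is weight-independent, and compare the two executions to see that $t'(v)$ hits zero no later than $t(v)$. Your explicit coupling invariant $t'(v)=t(v)-\delta$ (or $v\in C'$ already) is a more careful formalisation of what the paper sketches in two lines; in particular, it makes precise why the residuals $t(u)$ for $u\neq v$ stay synchronised, a point the paper leaves implicit.
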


\begin{proof}
	First, observe that the the order in which the edges are considered does not depend on the weights of the agents.
	
	Now assume that node $v$ is chosen with weight $w(v)$. This implies that at some step $i$ its remaining weight $t(v)$ was reduced to 0. If the weight was $w'(v) < w(v)$ then if $t(v) > 0$ by step $i$, it is also reduced to 0 at step $i$.
\end{proof}

\begin{lemma}
	The algorithm computes a 2-approximation of the minimum-weight vertex cover.
\end{lemma}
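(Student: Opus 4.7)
The plan is to apply the standard local ratio analysis, using the edge values $c(e)$ produced during the run as a common witness that simultaneously upper-bounds $\mathrm{ALG}$ and lower-bounds $\mathrm{OPT}$.

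First I would verify feasibility. When the algorithm reaches an uncovered edge $e=\{u,v\}$ at its colour step, both endpoints must satisfy $t(u),t(v)>0$, since any node whose residual hits $0$ is immediately added to $C$. Hence $m(e)=\min\{t(u),t(v)\}>0$, and after subtracting $m(e)$ from both residuals at least one of them becomes $0$, so at least one endpoint is added to $C$. Thus after its colour step every edge is covered, and the final $C$ is a valid vertex cover.

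Next I would set up the accounting. For each vertex $v$, let $S(v)$ denote the set of edges incident to $v$ that were actually processed (not skipped). An edge is processed at most once, since after processing one endpoint lies in $C$ and on any later visit it would be skipped; and any edge incident to $v$ processed after $v$ joined $C$ would already be covered and hence skipped. So $S(v)$ is precisely the set of processed edges incident to $v$. Since every processing subtracts $c(e)=m(e)$ from $t(v)$, starting from $w(v)$ and staying nonnegative, we get
\[
\sum_{e\in S(v)} c(e) \;\leq\; w(v),
\]
with equality exactly when $v\in C$ (so $t(v)$ reached $0$).

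Finally I would combine. For the algorithm,
\[
\mathrm{ALG} \;=\; \sum_{v\in C} w(v) \;=\; \sum_{v\in C}\sum_{e\in S(v)} c(e) \;\leq\; 2\sum_{e\in E} c(e),
\]
because each processed edge contributes $c(e)$ at most twice to the double sum (one contribution per endpoint). For any optimal vertex cover $C^*$, every processed edge has at least one endpoint in $C^*$, so
\[
\sum_{e} c(e) \;\leq\; \sum_{v\in C^*}\sum_{e\in S(v)} c(e) \;\leq\; \sum_{v\in C^*} w(v) \;=\; \mathrm{OPT},
\]
and chaining the two inequalities yields $\mathrm{ALG}\leq 2\,\mathrm{OPT}$. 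The only delicate point is the bookkeeping around $S(v)$: once one verifies that an edge is processed exactly once and is skipped if and only if some endpoint already lies in $C$, both inequalities are routine applications of the fact that each processing step just redistributes a vertex's weight onto an incident edge.
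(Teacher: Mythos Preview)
Your proof is correct and follows essentially the same local-ratio argument as the paper: both use the edge charges $c(e)=m(e)$ as a witness that simultaneously upper-bounds $\mathrm{ALG}$ by $2\sum_e c(e)$ and lower-bounds $\mathrm{OPT}$ by $\sum_e c(e)$. Your presentation is the explicit dual-fitting phrasing (bounding $\sum_{e\in S(v)} c(e)\le w(v)$ per vertex), whereas the paper uses the equivalent instance-reduction phrasing (each step reduces the optimum by $m(e)$ while charging $2m(e)$), but these are two standard ways to write the same argument.
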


\begin{proof}
	The proof is essentially the same as for the original local ratio algorithm ~\cite{baryehuda81linear}, as each colour class of edges can be seen as being processed in an arbitrary order: each ordering will produce the same end result (endpoints of an edge cannot cover another edge with the same colour). We provide the full proof for completeness.
	
	First, the output forms a vertex cover, as at least one of the endpoints of each edge joins the cover. We have the following invariant: after an edge $e = \{u,v\}$ is processed, we charge each of the endpoints $m(e)$. This implicitly creates a new instance with weights $t$ such that edge $e$ can be covered with cost 0: the optimum cost goes down by $m(e)$. Since the cost of the solution is at most the charges made at the nodes, and this is $2m(e)$ at each step, we have that the total cost is at most 2 times the optimum.
\end{proof}

\begin{lemma}
	The mechanism can be implemented with running time $O(\Delta + \log^* n)$ in the LOCAL model.
\end{lemma}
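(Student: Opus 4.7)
The plan is to bound the two parts of the algorithm separately: the preprocessing (edge colouring of $G$, equivalently proper colouring of the line graph) and the main loop processing one colour class at a time, and then to invoke Lemma~\ref{lem:local-payments} for the payments.

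First, I would compute a proper $(2\Delta-1)$-edge-colouring of $G$. The line graph has maximum degree at most $2(\Delta-1)$, and the Panconesi--Rizzi edge-colouring algorithm produces a $(2\Delta-1)$-edge-colouring in $O(\Delta + \log^* n)$ rounds in the LOCAL model; this dominates the preprocessing cost. (Any other known $(2\Delta-1)$-edge-colouring routine with the same asymptotic complexity would do equally well.) Crucially, this colouring depends only on the graph and the node identifiers, not on the bids, which is what was used in the monotonicity argument.

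Next, I would analyse the main loop. For each colour class $i = 1, 2, \ldots, 2\Delta - 1$ in turn, every edge $e = \{u,v\}$ of colour $i$ that is still uncovered needs to exchange the current residual weights $t(u)$ and $t(v)$, compute $m(e) = \min\{t(u), t(v)\}$, update both endpoints, and add any node with $t(v) = 0$ to $C$. Since edges of the same colour share no endpoint, these operations can proceed in parallel without conflict. Each colour class therefore costs $O(1)$ rounds, giving $O(\Delta)$ rounds in total for the main loop. Together with the preprocessing this gives an $O(\Delta + \log^* n)$-round algorithm for computing the vertex cover.

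Finally, for the payments I would apply Lemma~\ref{lem:local-payments}: since the allocation rule is monotone and can be computed in $O(\Delta + \log^* n)$ LOCAL rounds, each node can gather its $O(\Delta + \log^* n)$-hop neighbourhood and simulate the algorithm on all relevant alternative bids locally to extract its critical price, all within the same asymptotic round complexity. I do not expect serious obstacles here; the main thing to be careful about is that the edge-colouring step is genuinely bid-independent, so that simulating the allocation on alternative bids during payment computation reuses the same colouring and incurs no extra rounds.
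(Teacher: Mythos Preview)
Your proposal is correct and follows essentially the same approach as the paper: edge-colour the graph, process colour classes sequentially in $O(1)$ rounds each, and invoke Lemma~\ref{lem:local-payments} for the payments. The only cosmetic difference is that the paper cites the Maus--Tonoyan $(\Delta+1)$-colouring algorithm applied to the line graph (giving $O(\sqrt{\Delta\log\Delta}+\log^* n)$ rounds for the colouring step) whereas you cite Panconesi--Rizzi; since the main loop already costs $O(\Delta)$ rounds, either choice yields the same overall bound.
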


\begin{proof}
	The line graph can be coloured with constant multiplicative overhead with the best $(\Delta+1)$-colouring algorithm in $O(\sqrt{\Delta \log \Delta} + \log^* n)$ rounds~\cite{maus20linial}.
	In the loop each colour class can be handled in one round: the endpoints exchange their current values of $t$ and then can immediately compute $m(e)$. This takes in total $O(\Delta)$ rounds. By Lemma~\ref{lem:local-payments} the payments can be computed with the same round complexity.
\end{proof}

\subsection{Distributed mechanism for dominating set} \label{sec:dominating-set}

In this section we design a fast distributed mechanism for computing a minimum weight dominating set. A \emph{dominating set} is a subset $D$ of nodes that covers all nodes, i.e.\ each node either has a neighbour in $D$ or is in $D$.  
Each selected agent in $D$ has valuation $-w(v)$ and each non-selected agent has valuation 0.

\begin{theorem} \label{thm:dominating-set-mechanism}
	There is an incentive-compatible distributed mechanism that computes a $(1+\ln (\Delta+1))$-approximation of minimum-weight dominating set and runs in time $O(\Delta^3 W + \log^* n)$ in the LOCAL model.
\end{theorem}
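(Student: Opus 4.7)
The plan is to instantiate Theorem~\ref{thm:monotone} with a distributed version of the classical weighted greedy set cover algorithm of Johnson and Chv\'{a}tal applied to the instance in which each closed neighbourhood $N^+(v)$ is a ``set'' of cost $w(v)$ that covers the nodes in $N^+(v)$. Let $U$ denote the set of still-uncovered nodes and define the cost-effectiveness ratio $\rho(v) = b(v)/|N^+(v) \cap U|$; sequentially, greedy repeatedly picks a node of minimum $\rho$ and adds it to $D$, then removes the newly-covered nodes from $U$. For the distributed implementation I would precompute a proper colouring $\phi'$ of the square $G^2$, so that nodes at distance at most $2$ in $G$ receive different colours. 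Since $G^2$ has maximum degree $O(\Delta^2)$, such a colouring uses $O(\Delta^2)$ colours and is obtainable in $O(\Delta^2 + \log^* n)$ rounds using the colouring algorithms cited in the previous sections.

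Monotonicity follows from a step-by-step coupling. Fix all other bids and consider lowering $v$'s bid from $b(v)$ to some $b'(v) \le b(v)$. At every step of the algorithm $v$'s ratio $\rho(v)$ is no larger than in the original execution while every other node's ratio is unchanged, so by induction on the step number $v$ is a minimum-ratio candidate no later than before and hence is selected no later than before. Theorem~\ref{thm:monotone} then gives an incentive-compatible mechanism once we use the critical prices as payments.

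The approximation guarantee is inherited from the classical analysis once we verify that the distributed execution faithfully simulates a sequential greedy run. In each parallel selection only nodes at pairwise distance at least $3$ in $G$ are chosen (they share a $\phi'$-colour), and hence have pairwise disjoint closed neighbourhoods. Any linearisation of the parallel selections is therefore a legitimate run of sequential greedy set cover on sets of size at most $\Delta+1$, so the standard analysis yields a cost of at most $H_{\Delta+1} \le 1 + \ln(\Delta+1)$ times the optimum.

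For the running time, the ratio $\rho(v) = b(v)/k$ with $b(v) \in \{0,\dots,W\}$ and $k \in \{1,\dots,\Delta+1\}$ takes $O(\Delta W)$ distinct values, and since $|N^+(v) \cap U|$ is non-increasing over time, each node's ratio is monotonically non-decreasing and so visits each ratio at most once. The execution thus decomposes into $O(\Delta W)$ phases, one per current minimum ratio. Inside a phase I iterate through the $O(\Delta^2)$ colour classes of $\phi'$: in each class, every node whose current ratio equals the phase minimum adds itself to $D$, after which the covered nodes are removed from $U$ and neighbours update their ratios. Each colour class costs $O(1)$ communication rounds, giving $O(\Delta^2)$ rounds per phase and $O(\Delta^3 W + \log^* n)$ rounds in total. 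Lemma~\ref{lem:local-payments} then produces the critical-price payments in the same asymptotic time. The main obstacle I expect is specifying the 2-hop tie-breaking precisely enough that the parallel execution is provably equivalent to a serial greedy run; this is what forces the $\Delta^2$ (rather than $\Delta$) factor and distinguishes the dominating-set case from the independent-set case in Section~\ref{sec:independent-set}.
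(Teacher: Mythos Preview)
Your proposal is correct and matches the paper's approach: the same greedy set-cover reduction, a $G^2$-colouring for $2$-hop tie-breaking, the $O(\Delta W)\times O(\Delta^2)$ accounting for the running time, and Lemma~\ref{lem:local-payments} for the payments. One minor point: the paper proves monotonicity in the contrapositive direction (if $v$ is not selected with bid $b(v)$ then it is not selected with any larger bid), which keeps the coupling clean because the two executions then agree at \emph{every} step; your direct coupling additionally needs the (true but unstated) invariant that every still-active node's $(\rho,\phi')$-value is at least the value currently being processed, so that $v$ cannot be ``skipped over'' when its bid is lowered.
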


We use a distributed implementation of the greedy algorithm for approximating weighted set cover~\cite{johnson74approximation,chvatal79greedy}. In the weighted set cover problem, there is a universe $X$ of elements and a family $S$ of sets, each consisting of elements of $X$. Each set $s \in S$ has a weight $w(s)$, and the goal is to find a cover $C \subseteq S$ that 1) has $X$ as its union and 2) minimises the total weight of the cover.

Let $C$ be a partial cover, let $s$ be a set, and let $\bar{C}(s)$ be the set of uncovered elements $X \setminus \cup_{c \in C} c$. The \emph{ineffectiveness} of $s$ is defined as $f(s,C) = w(s)/|\bar{C}(s)|$. The sequential greedy algorithm repeatedly picks the element with the minimum ineffectiveness until all elements have been covered. It is known that if the sets have bounded size $k$, then the approximation factor of the algorithm is $H_k = 1 + 1/2 + \dots + 1/k \leq 1 + \ln k$ (see e.g.\ \cite[Chapter 2]{vazirani10approximation}).

Dominating set has a simple reduction to set cover: Let $G = (V,E,w)$ be a weighted graph. Set $X = V$. For each node $v \in V$ add the set $s(v) = N^+(v)$ to $S$ with $w'(s(v)) = w(s)$. Clearly a set cover $C$ of the instance $(U,S,w')$ corresponds to a dominating set of the graph $(V,E,w)$ with the same weight, and vice versa. When all sets have size at most $\Delta + 1$, the approximation factor is $1 + \ln (\Delta+1)$. For completeness, we give a proof for the distributed variant below, in Lemma~\ref{lem:set-cover-approx}.

The mechanism is implemented as a distributed algorithm as follows. Let $D_t$ denote the (partial) dominating set computed after $t$ steps of the loop. Initially we have $D_0 = \emptyset$. 
\begin{enumerate}[noitemsep]
	\item Compute a $(\Delta(G^2)+1)$-coloring $\phi$ of the $G^2$, the virtual graph obtained by connecting all pairs of nodes within distance 2 in $G$. This coloring will be used for tie breaking.
	\item Until $v$ becomes inactive, repeat the following steps. At step $t=1,2,\dots$, each node $v$ sends $f(v,D_{t-1})$ and $\phi(v)$ to its neighbours. Each node then picks the minimum value of $f$ among the received values and its own, breaking ties using $\phi$. It then sends the minimum value to its neighbours. If a node receives its own value from all neighbours, it is a local minimum and is selected to $D_{t}$. It announces joining $D$ to its neighbours and becomes inactive. Its neighbours $u$ update $f(u,D_{t})$ and announce to their neighbours if they became covered (so that they may update their ineffectiveness). If a node and all of its neighbours are covered, it becomes inactive. One step can be implemented in three communication rounds.
	\item Once a node is inactive and not in $D$, it computes its payment. 
\end{enumerate}

The structure of the algorithm differs from the first two examples: the quality of each node changes during execution as its degree decreases. To show that the algorithm is truthful, we first have to show that it is monotone.

\begin{lemma}[monotonicity] \label{lem:greedy-ds-mon}
	The greedy set cover algorithm is monotone. 
\end{lemma}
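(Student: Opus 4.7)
The plan is to reduce the distributed parallel algorithm to an equivalent sequential greedy and then prove monotonicity of the sequential variant by induction on $t^*$, the step at which $v$ is picked in execution A (run with weight $w(v)$). Let execution B denote the run with $w'(v) \leq w(v)$ and all other weights unchanged. The sequential variant picks, at each step, the unique node $u$ minimizing $(f(u, D_{t-1}), \phi(u))$ in $<_\phi$-order among active nodes. Its output equals $D$ because (i) the $<_\phi$-global minimum at any state is automatically a local $G^2$-minimum, so every sequential pick also occurs at some parallel step, and (ii) the local $G^2$-minima picked together in one parallel step are pairwise non-adjacent in $G^2$, hence have pairwise disjoint closed $G$-neighborhoods, so their simultaneous addition does not alter one another's ineffectiveness. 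A parallel step is therefore indistinguishable in outcome from any sequential ordering of its picks.

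For the induction on $t^*$: in the base case $t^* = 1$, $v$ is the $<_\phi$-global minimum in A, so $v <_\phi u$ for all $u \neq v$. In B we have $f^B_1(v) \leq f^A_1(v)$ while $f^B_1(u) = f^A_1(u)$ for $u \neq v$, so $v$ remains the $<_\phi$-global minimum and is picked at step 1. For the inductive step $t^* \geq 2$, let $u^* \neq v$ be the first pick of the sequential variant in A. In B, either $v <_\phi u^*$, in which case $v$ is the $<_\phi$-global minimum in B (by transitivity against every other node, whose $f$-value coincides in A and B) and is picked at step 1; or $u^* <_\phi v$, in which case $u^*$ is picked at step 1 in both executions. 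In the latter case the residual instances after step 1 coincide in their covered-element sets and in all weights except $v$'s, so the inductive hypothesis applied to the residual, in which $v$ is picked at step $t^*-1$ in A, yields that $v$ is picked in B.

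I expect the sequential reformulation to be the main obstacle. While intuitively clear, a rigorous argument must show that the parallel algorithm's batched updates within a single step commute with any sequential ordering of those same picks, yielding the same final $D$. The essential property is the disjoint-closed-neighborhood structure of $G^2$-local-minima, but the argument deserves an explicit inductive verification coupling the state $\bar{D}$ across the two versions and checking that the sequential greedy's next pick always lies in the currently open parallel batch.
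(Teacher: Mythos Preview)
Your proof is correct and close in spirit to the paper's, but the two arguments are organised differently. The paper also begins by replacing the adaptive parallel algorithm with an equivalent sequentialised variant; however, instead of a one-pick-per-step sequential greedy, it uses a \emph{non-adaptive} schedule with $K = O(\Delta^3 W)$ steps, where in step $t$ exactly those active nodes whose current pair $(f,\phi)$ equals the $t$th value $x_t$ in the global enumeration join. The monotonicity argument is then the contrapositive of yours: assuming $v\notin D$ with bid $b(v)$, an induction on $t$ shows $D_t = D'_t$ for every step of the two executions with bids $b(v)$ and $b'(v)>b(v)$, so $v\notin D'$ either. Your argument instead inducts on the pick-index $t^*$ of $v$ and peels off the first pick, yielding a slightly cleaner recursion at the cost of not establishing that the two executions coincide globally. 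Both reductions need an equivalence proof with the original parallel algorithm; the paper spells this out via a separate induction matching each parallel step to the corresponding $x_t$-slot, whereas your disjoint-closed-neighbourhood argument is the right idea but, as you note, would still need the explicit coupling you describe (and a remark that nodes at distance $>2$ may share the same $(f,\phi)$, so your global order needs an arbitrary but fixed further tie-break, which does not affect the outcome).
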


\begin{proof}
	To simplify the proof we initially analyse a different, non-adaptive variant of the algorithm. We will show that this algorithm is monotone and then show that it is equivalent to the original adaptive algorithm. 
	
	Let $K$ denote the number of values that the inefficiency $f$ can have: We have that $K = O(\Delta^3 W)$, as there are $\Delta W$ values from the function itself and $\Delta^2$ values from tie-breaking with the $O(\Delta^2)$-colouring. Let $x_1, x_2, \dots, x_K$ denote natural ordering of these values in the increasing order. 
	
	Now in each step $t$, a node $v$ joins the cover if and only if there is an uncovered node in $N^+(v)$ and $(f(v,D_{t-1}), \phi(v)) = x_t$. That is, nodes join at the round corresponding to their inefficiency (and tie-breaker).
	
	Now assume some node $v$ does not join $D$ with bid $b(v)$. We must show that it does not join with any bid $b'(v) > b(v)$ either. Call the two executions with bids $b(v)$ and $b'(v)$ \emph{original} and \emph{modified}, respectively. In the first step of the modified execution, the ineffectiveness of each node not $v$ is unchanged. Therefore the set of nodes that join in the first step are the same for both executions. Now assume that $D_{t} = D'_{t}$ (dominating sets for original and modified runs) up to some step $t$. Since the dominating sets agree up to step $t$, the ineffectivenesses also agree (except for $v$). Since $b'(v) > b(v)$ it follows that $v$ does not join since it did not join in the original execution. This implies that $v$ does not join in the modified run.
	
	Next we must show that the modified algorithm is equivalent to the original. For each $v \in D$ in the original algorithm, let $F(v)$ denote its inefficiency when it joins. Use induction on step $t = 1,2,\dots,K$ to show that exactly those nodes $v$ with $F(v) = x_t$ join in step $t$ in the modified algorithm. In the base case nodes with $F(v) = x_1$ join in the modified algorithm. Now assume that up to step $t$ exactly nodes $v$ with $F(v) = x_i$ join in each step $i \leq t$. Consider some $v$ with $F(v) = x_{t+1}$: Since $v$ is a local minimum, by induction assumption exactly those nodes that join before $v$ in its 2-neighbourhood in the original algorithm must have joined in the modified algorithm as well. Therefore inefficiency of $v$ is $F(v)$ in the modified algorithm as well and it joins. Conversely, assume some node $u$ joins in step $t$ in the modified algorithm. Since it is a local minimum (by definition) and the two algorithms agree on all lower inefficiency values, $u$ must join in the original algorithm as well with $F(u) = x_{t+1}$.
\end{proof}

Again, Lemma~\ref{lem:greedy-ds-mon} together with Theorem~\ref{thm:myerson} imply that there is a critical price that as the payment makes the mechanism incentive-compatible.

We finish this section by showing that the algorithm works correctly and bound its running time.

\begin{lemma} \label{lem:set-cover-approx}
	The greedy set-cover algorithm that selects all local minima computes a $(1+\ln (\Delta+1))$-approximation of minimum-weight dominating set.
\end{lemma}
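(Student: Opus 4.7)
My plan is to reduce weighted dominating set to weighted set cover with sets of size at most $\Delta+1$ and then carry out the classical $H_k$-analysis of the greedy algorithm, adapting it to the parallel execution used in the distributed version. The reduction is standard: set $X = V$, and for each node $v$ let $s(v) = N^+(v)$ with $w'(s(v)) = w(v)$. Every dominating set of $G$ corresponds to a set cover of the same weight and every set has size at most $\Delta + 1$.

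Rather than analysing the distributed algorithm directly, I would work with the modified non-adaptive variant whose equivalence with the original algorithm is established in the proof of Lemma~\ref{lem:greedy-ds-mon}. In that variant, every node whose current $(f(v, D_{t-1}), \phi(v))$ equals $x_t$ joins at step $t$. The crucial structural observation is that any two nodes joining at the same step share the colour $\phi$, and since $\phi$ properly colours $G^2$, they sit at distance at least $3$ in $G$; their closed neighbourhoods are therefore disjoint. Hence each newly covered element $e$ belongs to the $N^+$ of a unique joining set $s(v)$, and we may unambiguously define the charge $p(e) = w(v)/|\bar{C}(s(v))|$, computed just before that step. By construction $\sum_{v \in D} w(v) = \sum_{e \in V} p(e)$.

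To bound $\sum_e p(e)$, fix an optimum cover $C^*$ of weight $\mathrm{OPT}$ and take any $s^* = N^+(v^*) \in C^*$ with $k = |N^+(v^*)| \leq \Delta+1$. Order its elements $e_1, \ldots, e_k$ by the step at which they first become covered. Just before $e_i$ is covered, at least $k - i + 1$ elements of $s^*$ are still uncovered, so the ineffectiveness of $v^*$ at that moment is at most $w(v^*)/(k - i + 1)$. The key inequality is that the set covering $e_i$ has ineffectiveness at most this value, which follows from the following ordering property of the modified variant: because $f(v, D_{\cdot})$ is monotone non-decreasing in time and the algorithm processes the values $x_1, x_2, \dots$ in increasing order, any node $v$ that is still active at the start of step $t$ satisfies $(f(v, D_{t-1}), \phi(v)) \geq x_t$. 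Applied to $v^*$ at the step that covers $e_i$, this yields $p(e_i) \leq w(v^*)/(k-i+1)$. Summing over $i$ gives $\sum_{i} p(e_i) \leq H_k \cdot w(v^*) \leq (1 + \ln(\Delta+1)) \cdot w(v^*)$, and summing over $C^*$ yields $\sum_e p(e) \leq (1 + \ln(\Delta+1)) \cdot \mathrm{OPT}$, proving the claim.

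The main obstacle is the ordering property invoked in the previous paragraph: one must handle carefully the case where a node's ineffectiveness jumps past several $x_t$ values in a single step (because many neighbours may be covered simultaneously), and argue by induction on $t$ that an active node with an uncovered element in $N^+(v)$ at the start of step $t$ cannot have already fallen below $x_t$. Once this bookkeeping is settled, everything else is a routine restatement of the classical charging argument for greedy weighted set cover combined with the fact that sets have size at most $\Delta+1$.
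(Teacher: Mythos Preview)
Your proposal is correct and follows essentially the same approach as the paper: both use the classical charging argument for greedy weighted set cover, assigning to each newly covered element the ineffectiveness of the set that covered it, then bounding the total charge over the elements of each optimum set $N^+(v^*)$ by $H_{|N^+(v^*)|}\,w(v^*)$. The paper obtains the key inequality more tersely by noting that the algorithm ``always has the option of choosing $v_i$'' and that at most one selected node per step touches $N^+(v_i)$ (local 2-hop minima), whereas you route the same facts through the non-adaptive variant and an explicit ordering property; the extra bookkeeping you flag is real but routine, and the two arguments are otherwise identical.
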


\begin{proof}
	The proof follows from a standard argument (see e.g.\ \cite{vazirani10approximation}). We \emph{charge} each node $x$ for being covered by node $v$ at step $t$ with the cost $\alpha(x) = w(v)/|\bar{D}_t(v)|$ (i.e. spread the cost of $v$ equally among the new nodes it covers). The total cost of the solution computed by the greedy algorithm is $\sum_{x \in V} \alpha(x)$. 
	
	Let $D^* = \{v_1, v_2, \dots, v_k \}$ denote the nodes in an optimal solution. For each $N^+(v_i) = \{ x_1, x_2, \dots, x_{\ell} \}$, let the indexing denote the order in which they are covered by the greedy algorithm. At each step, at most one node is selected that covers nodes from $N^+(v_i)$ (select local 2-hop minima). When node $x_j$ is covered, the algorithm always has the option of choosing the node $v_i$: thus the charge on $x_j$ is at most $w(v_i)/(\ell-j+1)$. The total sum is at most
	\[
		\sum_{j=1}^{\ell} \alpha(j) \leq \sum_{j=1}^{\ell} w(v_i) / (\ell-j+1) = w(v_i) H_{\ell}, 
	\]
	where $H_n$ is the $n$th harmonic number.
	
	Note that $\ell \leq \Delta+1$, and the total cost of the greedy solution is at most
	\[
		\sum_{v_i \in D^*} \sum_{x_j \in N^+(v_i)} \alpha(x_j) \leq \sum_{v_i \in D^*} w(v_i) H_{\Delta+1} = H_{\Delta+1}w(D^*) \leq (1+\ln (\Delta+1))w(D^*).
	\]
	Last inequality follows from a standard bound on harmonic numbers.
\end{proof}

Finally, it remains to analyse the running time of the algorithm.

\begin{lemma}
	The mechanism can implemented in $O(\Delta^3 W + \log^* n)$ rounds in the LOCAL model.
\end{lemma}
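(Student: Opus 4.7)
The plan is to decompose the running time into three contributions: (i) the initial $(\Delta(G^2)+1)$-colouring of $G^2$, (ii) the main loop of local-minimum selection, and (iii) the payment computation. Each will be shown to fit within $O(\Delta^3 W + \log^* n)$ rounds.

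For (i), I would note that $G^2$ has maximum degree at most $\Delta^2$ and that one round of communication in $G^2$ can be simulated in two rounds of $G$. Applying the $(\Delta+1)$-colouring algorithm of Maus and Tonoyan~\cite{maus20linial} to $G^2$ therefore yields the tie-breaking colouring $\phi$ in $O(\sqrt{\Delta^2 \log \Delta^2} + \log^* n) = O(\Delta \sqrt{\log \Delta} + \log^* n)$ rounds of $G$, which is absorbed by the target bound.

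For (ii), the key quantity is the number of iterations of the main loop, each of which costs only three rounds of communication (broadcast of current $f(v, D_{t-1})$, exchange of the neighbourhood minimum, and announcement of joins/new coverage). I would bound this by the number $K = O(\Delta^3 W)$ of distinct $(f, \phi)$ pairs identified in the proof of Lemma~\ref{lem:greedy-ds-mon}: the numerator of $f$ has $W+1$ values, the denominator at most $\Delta+1$, and $\phi$ contributes $O(\Delta^2)$ further distinctions. The cleanest way to argue the step count is to observe that, because $\phi$ properly colours $G^2$, no two nodes within distance two share a $(f, \phi)$ pair, so the globally-smallest-valued active node is always a local minimum and joins $D$ in the current step; after the step, the global minimum $(f, \phi)$ value strictly increases (recall that $f$ is non-decreasing as neighbours become covered). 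Hence the loop terminates after at most $K$ steps. Equivalently, one may invoke the non-adaptive variant from the proof of Lemma~\ref{lem:greedy-ds-mon}, which processes exactly one value per step and is coupled step-for-step with the adaptive execution. Either way, the main loop costs $O(\Delta^3 W)$ rounds.

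For (iii), the algorithm is monotone by Lemma~\ref{lem:greedy-ds-mon} and runs in $O(\Delta^3 W + \log^* n)$ rounds, so Lemma~\ref{lem:local-payments} applies and the critical prices can be computed within the same asymptotic bound in the LOCAL model. Summing the three contributions gives the claim. The main obstacle I anticipate is making the step-count bound for (ii) airtight: one must carefully use the $G^2$-colour tie-breaking to ensure that the global minimum is unique within each 2-ball and therefore always selected, so that the potential function ``current global minimum value'' strictly increases each round and only $K$ values are ever visited.
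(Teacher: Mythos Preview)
Your proposal is correct and follows essentially the same approach as the paper: decompose into the $G^2$-colouring (via Maus--Tonoyan with constant-overhead simulation), bound the number of loop iterations by the $O(\Delta^3 W)$ distinct $(f,\phi)$ values using that the minimum is eliminated each step while remaining values can only increase, and invoke Lemma~\ref{lem:local-payments} for payments. The paper additionally includes a short correctness remark that the $G^2$-colouring guarantees a non-minimum node never receives its own value back, but your observation that no two nodes within distance two share an $(f,\phi)$ pair covers the same ground.
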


\begin{proof}
	To see that the algorithm is correct, in each loop a node receives its own colour and its own value if it is the 2-hop local minimum. If it is not, the value it receives must have a different colour, as the colouring used for tie-breaking is a 2-hop colouring. Local minima are therefore selected correctly.
	
	Computing the 2-hop coloring of the input graph can be done using the $(\Delta+1)$-colouring algorith of Maus and Tonoyan~\cite{maus20linial} in $O(\Delta \log \Delta + \log^* n)$ rounds. Simulating the 2-hop network only requires constant overheard in the running time.
	
	There are $O(\Delta W)$ different values for $f$ and $O(\Delta^2)$ values for the tie-breaking for a total of $O(\Delta^3 W)$ values. In each step the current minimum value gets eliminated, and values can only increase during the execution. One step can be implemented in a constant number of communication rounds.
	
	By Lemma~\ref{lem:local-payments} the payments can be computed with the same time complexity.
\end{proof}

\subsection{Optimality}

The approximation factor is asymptotically optimal for computing a minimum-weight dominating set in $o(\log n)$ rounds with a distributed algorithm: There exist $d$-regular graphs with no dominating sets of size $o(n \log d / d)$ that are indistinguishable in $o(\log n)$ rounds from graphs with perfect dominating sets~(again, see \cite{alon10constant}, Lemma~2.1). This implies that no $o(\log n)$-time distributed algorithm can get better than a $\Omega(\log d)$-approximation on $d$-regular graphs. The dependence on $n$ is optimal, as breaking symmetry requires $\Omega(\log^* n)$ rounds~\cite{Linial1992}.

\subsection{Mechanism for slot assignment} \label{sec:slot-mechanism}

In this section we describe a mechanism for a weighted colouring-type problem, inspired by the centralised auctions for sponsored advertisements~\cite{edelman07internet,varian07position}. This example differs from the previous three in that the output is no longer binary. Instead, each agent is assigned a real value, indicating the value of the slot to which it is assigned.

\subsection{The slot assignment problem}

Each agent can be assigned to one of $s$ possible \emph{slots}. Each slot $j \in [s]$ has an intrinsic value called \emph{rate} $\alpha_j$ that is common to all agents. We assume that the rates are non-decreasing, i.e.\ for $j < j'$ we have $\alpha_j \leq \alpha_{j'}$. We will assume that $s = \Delta+1$, as settings with fewer slots can be modelled by setting $\alpha_j = 0$ for all $j > s$. Each agent $v$ values the rate with factor $w(v) \in \{1,\dots, B\}$, and assigning agent $v$ to slot $i$ gives valuation $\alpha_i w(v)$. 
The input graph is the network of conflicts: neighbouring agents cannot be assigned to the same slot.

We show the following theorem.
\begin{theorem} \label{thm:slot-auction-mechanism}
	There is an incentive-compatible distributed mechanism that computes an $O(\Delta)$-approximation of the slot assignment problem and runs in $O(\Delta B + \log^* n)$ rounds in the LOCAL model.
\end{theorem}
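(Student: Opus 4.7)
The plan follows the MWIS mechanism template (Section~\ref{sec:independent-set}), adapted to the single-parameter non-binary setting, where each agent is allocated a rate $\alpha_j \in \{\alpha_1,\dots,\alpha_{\Delta+1}\}$ rather than a $\{0,1\}$ indicator. The algorithm is a natural greedy: compute a $(\Delta+1)$-coloring $\phi$ of $G$ in $O(\sqrt{\Delta \log \Delta} + \log^* n)$ rounds via~\cite{maus20linial}; define the priority of $v$ as the lexicographic pair $(b(v), \phi(v))$; then iteratively (i) select all local maxima of priority among active nodes, (ii) have each selected node pick the highest-rate slot not already claimed by an assigned neighbor, and (iii) deactivate. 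Because $s = \Delta+1$ strictly exceeds the maximum degree, a slot is always available, and because local maxima form an independent set, simultaneous picks do not conflict. Each step uses $O(1)$ communication rounds, and since each step eliminates at least one of the $(\Delta+1)B$ priority classes, the loop terminates within $O(\Delta B)$ rounds.

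For monotonicity I would argue as in Lemma~\ref{lem:greedy-is-monotone}, using the invariant that $u \in N(v)$ is processed strictly before $v$ iff $\mathrm{prio}(u) > \mathrm{prio}(v)$. Raising $b(v)$ to $b'(v) > b(v)$ can only shrink the set $\{u \in N(v) : \mathrm{prio}(u) > \mathrm{prio}(v)\}$: neighbors whose priority lies in the interval $(\mathrm{prio}_{\mathrm{old}}(v),\mathrm{prio}_{\mathrm{new}}(v)]$ now wait for $v$ rather than being processed before it. A coupling argument shows that the remaining higher-priority neighbors of $v$ pick the same slots as before (since their own higher-priority-neighbor sets are unchanged), so the set of slots unavailable to $v$ can only shrink and $v$'s chosen rate $\alpha_{g(v)}$ is non-decreasing in $b(v)$.

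For the $O(\Delta)$-approximation I would use a charging argument. For each $v$ let $\pi(v)$ be the priority-maximal node in $N^+(v)$ and charge $v$'s OPT contribution $w(v)\alpha_{o(v)} \leq w(v)\alpha_{\Delta+1}$ to $\pi(v)$. Each $u$ is charged by at most $|N^+(u)| \leq \Delta+1$ nodes, and every such $v$ satisfies $w(v) \leq w(u)$ (since $u$ is priority-, hence weight-, maximal in $N^+(v)$), giving total charge at most $(\Delta+1) w(u)\alpha_{\Delta+1}$. To bound greedy's utility from below, I use the fact that whenever $u = \pi(v)$ is additionally priority-maximal in $N^+(u)$, $u$ is a first-step local maximum and gets slot $\Delta+1$ in greedy, contributing $w(u)\alpha_{\Delta+1}$ --- a $(\Delta+1)$-fraction of the charge placed on $u$. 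The remaining case (when $u$ is not its own $\pi$) is handled by following the priority-ascending chain $u, \pi(u), \pi(\pi(u)),\dots$ to a first-step local maximum and amortizing charges along it. The tight example --- a star whose centre is priority-maximal with $\alpha_1 = \cdots = \alpha_\Delta = 0$ and $\alpha_{\Delta+1} = 1$ --- confirms that the approximation ratio is $\Theta(\Delta)$.

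Finally, Myerson's Lemma (Theorem~\ref{thm:myerson}) yields the unique truthful payment rule, which in the discrete-bid setting takes the form $p(v) = \sum_{k=1}^{b(v)} k \bigl[ x(v;k) - x(v;k-1) \bigr]$, where $x(v;k)$ is $v$'s allocated rate under $(k, b_{-v})$. By Lemma~\ref{lem:local-payments}, this computes in the same $O(\Delta B + \log^* n)$ LOCAL rounds by collecting each $v$'s $O(\Delta B + \log^* n)$-hop ball and simulating the algorithm for each candidate $k \in \{0,\dots,B\}$. The main obstacle will be completing the charging argument: the pointwise bound ``a node with $k$ higher-priority neighbors gets rate $\geq \alpha_{\Delta+1-k}$'' is too weak when $\alpha_1$ may be $0$, so the amortization across priority-ascending chains is what makes the analysis go through.
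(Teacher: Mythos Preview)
Your algorithm, the monotonicity argument, the invocation of Myerson's Lemma (and Lemma~\ref{lem:local-payments}) for payments, and the $O(\Delta B + \log^* n)$ running-time analysis all match the paper's proof and are correct.

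The gap is in the $O(\Delta)$-approximation argument. Charging each $v$ to $\pi(v)$, the priority-maximum of $N^+(v)$, and then ``following the chain $u,\pi(u),\pi(\pi(u)),\dots$ and amortising'' is not a complete argument, and the natural reading of it fails. On a path $v_1\!-\!v_2\!-\!\cdots\!-\!v_m$ with strictly increasing weights (so $\Delta=2$) we have $\pi(v_i)=v_{i+1}$ for all $i<m$; hence every $\pi$-chain terminates at the single local maximum $v_m$. If ``amortising'' means forwarding the uncovered charges along the chain to its terminus, then $v_m$ absorbs $\Theta(m)$ charges against a greedy contribution of only $w(v_m)\alpha_{\Delta+1}$, and you cannot recover an $O(\Delta)$ ratio. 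The difficulty is exactly the one you flag at the end: $\pi(v)$ need not lie in $S_{\Delta+1}$, so the node you charge may itself have greedy value $0$, and chasing $\pi$ further does not bound the fan-in.

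The paper takes a different route that avoids charging altogether: it argues slot by slot that $w(S_k)$ is at least the average of $w(S_1),\dots,w(S_{k-1})$ (each node in $S_k$ has at most $k-1$ lower-priority neighbours, all of smaller weight), from which $\sum_{i=\Delta+2-j}^{\Delta+1} w(S_i)\ge j\,w(V)/(\Delta+1)$ and hence $u(o)\ge\bigl(\sum_i\alpha_i\bigr)w(V)/(\Delta+1)$. If you want to stay within a charging framework, the clean fix is to change the target: charge each $v$ not to $\pi(v)$ but to the higher-priority neighbour of $v$ that actually occupies slot $\Delta+1$ (such a neighbour exists whenever $v\notin S_{\Delta+1}$, since that is precisely who blocked $v$ from the top slot, and it has weight $\ge w(v)$). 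Then each $u\in S_{\Delta+1}$ receives at most $\Delta$ charges, yielding $w(V)\le(\Delta+1)w(S_{\Delta+1})$ and $u(o)\ge \alpha_{\Delta+1}w(S_{\Delta+1})\ge u(o^*)/(\Delta+1)$ in one step, with no chains and no amortisation.
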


As far as we are aware, this problem has not been previously studied in the distributed setting. The original motivation for the slot auction is in selling slots for search engine advertisements. However, it can be generalised to the assignment of any interchangeable goods of deteriorating quality (with local conflicts). As an example, the colours could represent some schedule by which a locally conflicting operation is performed, and agents value being able to perform the operation as soon as possible.

\subsection{Algorithm}

We use a modified version of the algorithm for computing an independent set (Section~\ref{sec:independent-set}). The algorithm repeatedly picks the set of local maxima according to $<_{\phi}$ and assigns them to the best available slot. This can be done in parallel, as local maxima cannot be adjacent.

More formally, the algorithm consists of two phases, preprocessing and selection. Preprocessing occurs as described in Section~\ref{sec:independent-set}, where an order $<_{\phi}$ is computed. The selection phase is divided into steps. In the first round of the first step each node shares the bid of the corresponding agent to its neighbours. In the first rounds of corresponding steps each node sends its slot if got assigned one in the previous step. Each node keeps track of the slots used by its neighbours. Once a node receives messages in the first round of each step, it can determine whether it is a local maximum. If it is, it selects the largest available slot. In the next step it announces this selected slot and stops.

Payment function is the critical price given by Myerson's lemma. Assume focal agent $v$ bids $b(v)$ and is assigned to slot $c$. For a fixed $b_{-v}$, let $C(v) = \{c_1, c_2, \dots, c_{\ell} \}$ denote the set of possible slots (over all possible bids) for $v$. For $k \in \{1, \dots, \ell\}$, let $b^*_k(v)$ denote $k$th critical threshold of $v$: the smallest bid such that $v$ is assigned to slot $c_k$. The valuation of $v$ is $\alpha_c w(v)$, and its payment is $-\sum_{k=1}^c (\alpha_k - \alpha_{k-1}) b^*_k$ (where $\alpha_0 = 0$).

As we will show, this payment function has two key properties. First, the contribution of each slot is a constant independent of the bid of the focal agent -- the bid only determines which slots contribute. Second, the slots with a critical price above the weight of an agent have a non-positive contribution. This will ensure truthfulness.

\begin{lemma}
	The slot assignment mechanism is monotone and truthful.
\end{lemma}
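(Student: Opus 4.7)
My plan is to first show monotonicity of the allocation rule and then use the payment formula stated just before the lemma to conclude truthfulness via a direct utility calculation; both steps rest on a sequential-processing reformulation of the algorithm.

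I would first observe that the parallel algorithm produces the same assignment as the sequential greedy that processes nodes one by one in decreasing priority order under $<_\phi$, each node then taking the highest-rate slot not yet used by an already-processed neighbour. The equivalence holds because (i) the local maxima selected in any single parallel step form an independent set, so their simultaneous slot picks cannot conflict, and (ii) by a short induction on the step index, a neighbour $u$ of a node $z$ is assigned before $z$ in the parallel algorithm if and only if $u$'s priority in $<_\phi$ exceeds that of $z$. Consequently both formulations see exactly the same set of already-committed neighbouring slots at the instant each node chooses its own slot.

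For monotonicity, fix $b_{-v}$ and compare the executions with $v$ bidding $b(v)$ and $b'(v) > b(v)$. Only $v$'s priority changes and every other node's priority and relative order are preserved, so in the sequential order only $v$ moves (weakly) earlier. The set of neighbours of $v$ that precede $v$ in the processing order can therefore only shrink. For any such predecessor $u$ that remains a predecessor under the new bid, $u$'s slot is determined purely by $u$'s own predecessors, whose priorities are unchanged and which do not include $v$; by induction on processing position, $u$ receives exactly the same slot in both executions. Hence the set of slots forbidden to $v$ at the moment it chooses can only shrink, and since $v$ greedily takes the highest-rate available slot, the rate of the slot $v$ obtains is non-decreasing in $b(v)$.

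For truthfulness, write $c_1 < c_2 < \cdots < c_\ell$ for the slots in $C(v)$, $b^*_1 < b^*_2 < \cdots < b^*_\ell$ for the corresponding critical thresholds, and set $\alpha_{c_0}=0$. If $v$ has true weight $w(v)$ and submits a bid $x$ with $b^*_k \le x < b^*_{k+1}$, then it is assigned slot $c_k$ and, combining the stated payment with the identity $\alpha_{c_k}=\sum_{j=1}^{k}(\alpha_{c_j}-\alpha_{c_{j-1}})$, its total utility is
\[
U_v(x) \;=\; \alpha_{c_k}\, w(v) \;-\; \sum_{j=1}^{k}(\alpha_{c_j}-\alpha_{c_{j-1}})\, b^*_j \;=\; \sum_{j=1}^{k}(\alpha_{c_j}-\alpha_{c_{j-1}})\bigl(w(v) - b^*_j\bigr).
\]
By monotonicity of $\alpha$ every coefficient $\alpha_{c_j}-\alpha_{c_{j-1}}$ is non-negative, so the $j$-th term has the sign of $w(v)-b^*_j$. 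Bidding $b(v)=w(v)$ selects the largest index $k$ with $b^*_k\le w(v)$, thereby including precisely the non-negative terms and excluding the non-positive ones, so truth maximises $U_v$. The main obstacle is the slot-stability claim inside the monotonicity argument---verifying that the slots of the earlier-processed neighbours of $v$ are insensitive to $v$'s own bid---which the sequential-processing viewpoint reduces to a one-line induction on processing position.
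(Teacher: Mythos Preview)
Your proof is correct. The monotonicity argument differs from the paper's in presentation: the paper works directly with the parallel execution, arguing step by step that in each round $t'<t$ either the set of local maxima is unchanged or $v$ itself becomes a local maximum early (so the slots occupied in $N(v)$ when $v$ chooses form a subset of those in the original run). You instead first reduce to an equivalent sequential greedy in decreasing $<_\phi$-order and then observe that raising $b(v)$ only moves $v$ earlier, leaving the slots of its (remaining) predecessors untouched. Both routes hinge on the same fact---the slots of $v$'s higher-priority neighbours are independent of $b(v)$---but your sequential viewpoint isolates this cleanly and makes the induction on processing position explicit, whereas the paper's step-by-step simulation stays closer to the actual distributed execution.

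For truthfulness, the two arguments are algebraically the same computation: the paper expands $u_v(b',c_{j'})-u_v(b,c_j)$ and bounds it using $w(v)<b^*_{j+1},\dots,b^*_{j'}$, while you telescope $\alpha_{c_k}=\sum_{j\le k}(\alpha_{c_j}-\alpha_{c_{j-1}})$ to write the utility as $\sum_{j\le k}(\alpha_{c_j}-\alpha_{c_{j-1}})(w(v)-b^*_j)$ and note that truthful bidding selects exactly the non-negative summands. Your formulation is slightly tidier and makes the role of the thresholds $b^*_j$ as ``marginal prices'' more transparent; the paper's version is more explicit about the over/underbidding case split.
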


This follows from Myerson's lemma and showing monotonicity, but we give an independent proof for completeness.

\begin{proof}
	In each step of the algorithm the current local maxima are selected and assigned lowest available slots. Assume that with bid $b(v)$ agent $v$ is assigned to a slot in step $t$ of the algorithm. Now consider the alternative execution, where $v$ bids $b'(v) > b(v)$ instead: in each step $t' < t$ either the set of local maxima is the same as in the original execution, or $v$ is a local maximum. In the first case the two executions do not differ, and in the second case $v$ is assigned a slot in step $t'$. Since the set of used slots in its neighbourhood is a subset of those that are used at step $t$ with bid $b(v)$, it is assigned to a slot $j' \geq j$. If $i$ is not selected in any step $t'$ then the two executions agree up to step $t$ and $i$ is a local maximum with bid $b'(v)$. Therefore it is selected and assigned the same slot as in the original execution, establishing the monotonicity. 
	
 	Next prove that overbidding (or underbidding) does not improve utility. Assume that the true valuation $w(v)$ of agent $v$ is such that it is assigned to the $j$th possible slot with bid $b(v) = w(v)$, i.e.\ we have that $b^*_j(v) \leq w(v) < b^*_{j+1}(v)$. The truth has valuation $\alpha_{c_j} w(v) - \sum_{k=1}^j (\alpha_{c_k} - \alpha_{c_{k-1}}) b^*_k$. By definition, bidding $b'(v): b^*_{j}(v) \leq b'(v) < b^*_{j+1}$ has no effect on the utility. Bidding $b'(v) \geq b^*_{j+1}$ causes $v$ to be assigned some slot $c_{j'} \geq c_{j+1}$. The valuation of $v$ is now 
	\begin{align*}
			u_v(b', c_{j'}) &= \alpha_{c_{j'}} w(v) - \sum_{k=1}^{j'} (\alpha_{c_k} - \alpha_{c_{k-1}}) b^*_k(v) \\
			&= u_v(b, c_{j}) + (\alpha_{c_{j'}} - \alpha_{c_j}) w(v) - \sum_{k=j+1}^{j'} (\alpha_{c_k} - \alpha_{c_{k-1}}) b^*_k(v) < u_v(b,c_j).
	\end{align*}
	The last inequality follows from the fact that the true valuation $w(v)$ is less than critical thresholds $b^*_{j+1}, \dots b^*_{j'}$. By a similar argument a agent should not underbid, as each slot that can be gained with a truthful bid gives non-negative utility.
	
	Since the critical thresholds are independent of the bid of the focal agent $v$ and increasing, it follows that bidding $b(v) = w(v)$ maximises utility.
\end{proof}

\begin{lemma}
	The mechanism can be implemented with a total running time $O(\log^* n + B\Delta)$, where $B$ denotes the maximum rate of an agent, in the LOCAL model.
\end{lemma}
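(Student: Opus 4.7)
The plan is to mirror the running-time analysis for the greedy independent set mechanism (Lemma~\ref{lem:greedy-is-running-time}), adapting it to the fact that agents now receive slots rather than a binary in/out decision, and then invoke Lemma~\ref{lem:local-payments} to handle the payments.

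First I would analyse the preprocessing phase. The tie-breaking order $<_{\phi}$ requires a $(\Delta+1)$-colouring of the input graph. Using the algorithm of Maus and Tonoyan~\cite{maus20linial}, this colouring can be produced in $O(\sqrt{\Delta \log \Delta} + \log^* n) = O(\Delta + \log^* n)$ rounds in the LOCAL model, which fits within the target bound.

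Next I would bound the selection phase. The algorithm proceeds in steps, in each of which the current local maxima of $<_{\phi}$ are identified and assigned the lowest-indexed slot not yet used by any neighbour. Each step can be implemented in a constant number of rounds: each node announces its current bid and colour to its neighbours, determines whether it is a local maximum, and in the following round broadcasts the slot it has been assigned so that its neighbours can update their local bookkeeping of used slots. The key observation is that the ordering $<_{\phi}$ refines bids by colours, yielding at most $B \cdot (\Delta+1) = O(B\Delta)$ distinct $(b,\phi)$-classes. Because the largest remaining class always becomes a set of local maxima and is eliminated in the next step, after $O(B\Delta)$ steps every active agent has been assigned a slot. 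Combined with the preprocessing, this gives $O(B\Delta + \log^* n)$ rounds for computing the allocation.

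Finally, I would handle the payments. By the monotonicity argument already established for the allocation rule, the mechanism falls in the scope of Lemma~\ref{lem:local-payments}: given a $T(n,\Delta,B)$-time monotone distributed algorithm in the LOCAL model, the critical prices (in this case the slot-indexed thresholds $b^*_k(v)$ used in the payment formula) can be computed within the same asymptotic complexity by gathering the relevant neighbourhood and locally simulating the algorithm on modified bids for $v$. Applying this with $T(n,\Delta,B) = O(B\Delta + \log^* n)$ yields the claimed total running time.

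The main obstacle I anticipate is not in the timing but in being explicit about what ``one step'' really costs once agents have to track the set of slots occupied by neighbours in order to pick the smallest available one; this is a small bookkeeping argument showing that maintaining the used-slot set requires only one extra constant-round exchange per step and does not introduce any dependence on the slot index beyond the $\Delta$ already present.
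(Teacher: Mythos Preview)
Your proposal is correct and follows essentially the same approach as the paper: bound the colouring step via~\cite{maus20linial}, observe that $<_{\phi}$ has $O(B\Delta)$ classes so the selection loop terminates in $O(B\Delta)$ constant-round steps, and invoke Lemma~\ref{lem:local-payments} for the payments. One incidental slip: the algorithm assigns each local maximum the \emph{largest} available slot (highest rate), not the lowest-indexed one, but this has no bearing on the running-time argument.
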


\begin{proof}
	The preprocessing step of colouring the graph with $(\Delta+1)$ colours can be done in $O(\log^* n + B\Delta)$ rounds~\cite{maus20linial}.
	
	In each step of the second phase the local maxima are assigned slots and stop. The ordering $<_{\phi}$ has in total $(\Delta+1)B$ values, so after that many steps all nodes have stopped. One step can be implemented in 2 rounds in the LOCAL model. Critical prices can be computed based on Lemma~\ref{thm:myerson}.
\end{proof}

\begin{lemma}
	The greedy algorithm has approximation ratio at most $\alpha_1 (\Delta+1)/(\sum_{i=1}^{\Delta+1} \alpha_i)$.
\end{lemma}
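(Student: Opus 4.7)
The plan is to bound $\mathrm{OPT}$ and $\mathrm{ALG}$ separately. The easy direction is the upper bound $\mathrm{OPT} \le \alpha_{\Delta+1} W$ with $W = \sum_v w(v)$, since every feasible assignment gives each agent utility at most $\alpha_{\Delta+1} w(v)$ (the largest possible rate).

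The main step is the matching lower bound $\mathrm{ALG} \ge \bar\alpha\, W$ with $\bar\alpha = \tfrac{1}{\Delta+1}\sum_i \alpha_i$. Writing $\alpha_j = \sum_{k \le j}\delta_k$ with $\delta_k = \alpha_k - \alpha_{k-1} \ge 0$ and $\alpha_0 = 0$, Abel summation gives
\[
\mathrm{ALG} \;=\; \sum_v \alpha_{c(v)} w(v) \;=\; \sum_{k=1}^{\Delta+1} \delta_k\, T_k, \qquad T_k := \sum_{v:\, c(v) \ge k} w(v),
\]
and analogously $\bar\alpha\, W = \sum_k \delta_k \cdot \tfrac{\Delta+2-k}{\Delta+1} W$. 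Since $\delta_k \ge 0$ it suffices to show termwise that $T_k \ge \tfrac{\Delta+2-k}{\Delta+1} W$: greedy's top $m := \Delta+2-k$ slots always carry at least an $m/(\Delta+1)$ fraction of the total weight. I plan to prove this weight-distribution property by a double-counting argument. If greedy places $v$ in slot $j$, then each of the slots $j+1,\dots,\Delta+1$ must have been blocked by an earlier-processed neighbour of $v$, and by the greedy processing order each such neighbour is at least as heavy as $v$ (ties broken by $\phi$). Hence every node in the bottom $(\Delta+1-m)$ slots contributes at least $m$ heavier-neighbour incidences into the top $m$ slots; together with the bound $\deg(u) \le \Delta$ for each top-slot neighbour $u$, this bounds the aggregate weight of the bottom slots by $(\Delta+1-m)W/(\Delta+1)$, as required.

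Combining the two pieces yields $\mathrm{OPT}/\mathrm{ALG} \le \alpha_{\Delta+1}/\bar\alpha = (\Delta+1)\alpha_{\Delta+1}/\sum_i\alpha_i$, which is at most $\Delta+1$ (since $\sum_i \alpha_i \ge \alpha_{\Delta+1}$) and so yields the $O(\Delta)$-approximation of Theorem~\ref{thm:slot-auction-mechanism}. The main obstacle I expect is carrying the weights through the weight-distribution claim: the unweighted statement (the top $m$ slots contain at least $mn/(\Delta+1)$ nodes) falls out cleanly from the degree bound alone, but the weighted version really uses the greedy processing order, because a single heavy earlier neighbour could in principle block top slots for many light nodes; formalising the weight transfer will likely require a weighted exchange/matching argument between the bottom block and its dominating top-block neighbours, or an induction over the greedy processing order that maintains $T_k \ge mW/(\Delta+1)$ step by step.
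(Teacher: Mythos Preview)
Your overall architecture matches the paper exactly: upper-bound $\mathrm{OPT}\le \alpha_{\Delta+1}W$ (the paper's ``$\alpha_1$'' in the statement is a typo under its own non-decreasing convention), lower-bound $\mathrm{ALG}\ge \bar\alpha W$, and reduce the latter via Abel summation to the weight-distribution claim $T_m\ge \tfrac{m}{\Delta+1}W$ for every $m$. The paper does precisely this, phrasing the last step as ``for any $j$, the top $j$ slots carry at least $jW/(\Delta+1)$ weight''.

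The gap is exactly where you flagged it: your proposed double count does not yield the sharp constant. From ``each bottom node has $m$ heavier top-slot blockers'' together with only $\deg(u)\le\Delta$ you get
\[
m\,(W-T_m)\;\le\;\sum_{\substack{v\ \mathrm{bottom}\\ u\ \mathrm{top\ blocker\ of}\ v}} w(u)\;\le\;\Delta\,T_m,
\]
hence $T_m\ge \tfrac{m}{\Delta+m}W$, which is strictly weaker than $\tfrac{m}{\Delta+1}W$ for $m<\Delta$. The missing structural fact is the one the paper exploits: a node $u\in S_k$ must itself have $\Delta{+}1{-}k$ heavier neighbours occupying slots $k{+}1,\dots,\Delta{+}1$, so $u$ has at most $k-1$ neighbours in lower slots. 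Combined with ``every node in $S_1\cup\dots\cup S_{k-1}$ has a heavier blocker in $S_k$'', this gives a map into $S_k$ that is at most $(k-1)$-to-one with weight-increasing fibres, hence the per-slot inequality
\[
w(S_k)\;\ge\;\frac{1}{k-1}\sum_{i<k} w(S_i).
\]
A one-line induction on $j$ then turns this recurrence into $T_j\ge \tfrac{j}{\Delta+1}W$. So the fix to your plan is not a global exchange between the top and bottom blocks, but the slot-by-slot version of your blocking observation, applied in the \emph{reverse} direction (bounding how many lower-slot nodes a fixed $S_k$-node can block). Once you have the recurrence, your Abel-summation step goes through verbatim and the proof is complete.
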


\begin{proof}
	Since the slots are assigned greedily, each agent assigned to slot $k$ has neighbours of larger weight in slots $k+1, k+2, \dots, \Delta+1$.
	
	This means that each agent assigned to slot $k$ has at most $\Delta - k$ neighbours at a lower level -- neighbours that it can block from taking slot $k$. Let $S_1, S_2, \dots, S_{\Delta+1}$ denote the sets of nodes assigned to slots $1, 2, \dots, \Delta+1$, respectively, by the greedy algorithm. We will analyse the worst-case weight distribution of the output. 
	
	Since each agent in slot 2 has at most one neighbour in slot 1 (it requires $\Delta-1$ edges to higher slots), we have that $w(S_1) \leq w(S_2)$. Each agent in $S_3$ has at most two neighbours in lower slots, and its weight has to be higher than both (worst case): we get that $w(S_3) \geq (w(S_2) + w(S_1))/2$. More generally, it is easy to see that $w(S_k)$ is at least the average weight of the lower slots: $w(S_k) \geq (\sum_{i=1}^{k-1} w(S_i))/(k-1)$.
	
	We have $w(S_{\Delta+1}) \geq (w(V) - w(S_{\Delta+1}))/\Delta$, implying that $w(S_{\Delta+1}) \geq w(V)/(\Delta+1)$. More generally, for any $j \leq \Delta$ we have that $\sum_{i=0}^{j-1} w(S_{\Delta+1-i}) \geq jw(V)/(\Delta+1)$. Again, assuming worst case, we have that the total valuation of the greedy solution $o$ is lower bounded by 
	\[
		u(o) \geq \biggl(\sum_{i=1}^{\Delta+1} \alpha_i\biggr)w(V)/(\Delta+1).
	\]
	Now let $o^*$ denote the optimum solution. We have that
	\[
	\frac{w(o^*)}{w(o)} \leq \frac{\alpha_1 (\Delta+1)}{\sum_{i=1}^{\Delta+1} \alpha_i}.
	\]
\end{proof}

\section{Discretisation} \label{sec:discretisation}

In this section we show how to deal with real-valued parameters. The mechanisms, as presented (except the local ratio algorithm for vertex cover) do not work well in this setting, as real-valued parameters allow for arbitrarily long dependency chains, making distributed implementation inefficient. We deal with this issue by discretisation: subdividing the interval of possible parameter values into smaller intervals, each of which is considered a single value in the mechanism. We show that truthfulness is retained while the quality of the solution is affected as a function of the size of a discretisation step.

\subsection{Discretisation for binary optimisation problems}

So far we assumed that the weights came from the set $\{0, 1, 2, \dots, W\}$ for some maximum weight parameter $W$. Now we generalise this so that the weights of the agents come from the real interval $[0,W]$. We assume that the underlying algorithm functions correctly for any set of $W$ distinct weights such that the running time dependency on the weights is a function of $W$. We call a mechanism based on such an algorithm \emph{scale-free}.

We model the discretisation as follows. Each agent $v$ now has a weight $w(v) \in [0,W]$. The mechanism designer fixes a parameter $\varepsilon > 0$: this is the size of the discretisation step. For convenience we assume that $K = W/\varepsilon$ is an integer. The set of possible values used by the discretised mechanism is defined as $W_{\varepsilon} = \{0, \varepsilon, 2\varepsilon, \dots, K\varepsilon \}$. When an agent submits its real-valued bid $b(v)$, if the problem is a maximisation problem this is converted into a discrete value by rounding down in $W_{\varepsilon}$: by choosing the maximum value $s \in W_{\varepsilon} : s \leq b(v)$. Denote the bid vector obtained this way by $b_{\varepsilon}$. The direction of the rounding is important for the truthfulness of our mechanisms: for minimisation problems, the rounding is up in $W_{\varepsilon}$ instead.

After this discretisation step, a mechanism designed with the assumption of discrete values can be run with the bids given by $b_{\varepsilon}$. 

\begin{theorem}
	Assume that $M$ is a scale-free truthful mechanism with a monotone assignment rule for a binary optimisation problem. Then $M_{\varepsilon}$, the same mechanism with the added discretisation step, is also truthful.
\end{theorem}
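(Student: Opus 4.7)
The plan is to view the discretised mechanism $M_\varepsilon$ as itself a mechanism on real-valued bids $b(v) \in [0,W]$, whose allocation rule is the composition of the rounding operation with the allocation rule of $M$, and whose payment rule is the critical-price payment of $M$ evaluated on the rounded bids. Since Theorem~\ref{thm:monotone} already tells us that any monotone allocation rule paired with critical-price payments yields a truthful mechanism, it suffices to verify two things: that the composite allocation rule is monotone in the real-valued bid, and that the payment it charges is exactly the critical price in the real-valued scale.

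First I would handle monotonicity. Consider the maximisation case. The rounding map $b(v) \mapsto b_\varepsilon(v)$ is weakly increasing: if $b'(v) \geq b(v)$ then the largest element of $W_\varepsilon$ below $b'(v)$ is at least the largest below $b(v)$. Composing with the monotone allocation rule of $M$ (which only cares about the discretised vector), we get that if $v$ is selected at real bid $b(v)$ then it is selected at every real bid $b'(v) \geq b(v)$, keeping $b_{-v}$ fixed. The minimisation case is symmetric using the fact that rounding \emph{up} is also weakly increasing.

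Second, I would identify the critical price of the composite mechanism. Because $M$ is monotone on $W_\varepsilon$, there is a discrete critical price $b^*_\varepsilon(v) \in W_\varepsilon$ such that $v$ is selected under $M$ iff its rounded bid is at least $b^*_\varepsilon(v)$ (resp.\ at most, in the minimisation case). The key observation is that the boundaries of the rounding intervals align with $W_\varepsilon$, so for a real bid $b(v) = k\varepsilon \in W_\varepsilon$, rounding down gives $b(v)$ itself; consequently, for maximisation, the real-valued threshold at which $v$ switches from unselected to selected is exactly $b^*_\varepsilon(v)$. The analogous calculation for minimisation, using rounding up, shows that the real-valued critical price coincides with $b^*_\varepsilon(v)$ there as well. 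This is precisely where the prescribed rounding direction (down for maximisation, up for minimisation) is used.

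With monotonicity and matching critical prices in hand, the payment rule of $M_\varepsilon$, namely $-b^*_\varepsilon(v)$ or $b^*_\varepsilon(v)$ depending on whether the problem is maximisation or minimisation, agrees with the critical-price payment rule computed from the allocation function of $M_\varepsilon$ viewed as a function of the real bid. Applying Theorem~\ref{thm:monotone} to this composite mechanism over $[0,W]$ gives the truthfulness of $M_\varepsilon$. The main subtlety, and the only place where the argument could fail, is the alignment between the rounding direction and the inequality in the definition of the critical price; if one rounded in the wrong direction, the real-valued threshold would land strictly inside a discretisation interval and the payment could differ from the real critical price by an amount up to $\varepsilon$, creating a gap an agent could exploit. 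Verifying this alignment carefully in both the maximisation and minimisation cases will be the one nontrivial step.
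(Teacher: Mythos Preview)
Your approach is correct and somewhat more modular than the paper's. The paper carries out a direct case analysis on the utility of a focal agent $v$: fixing $b_{-v}$ and letting $b^*(v)\in W_\varepsilon$ denote the discrete critical price, it checks separately the cases where truthful bidding puts $v$ in or out of the selected set and verifies in each case that neither overbidding nor underbidding can improve total utility. In effect the paper re-derives the conclusion of Theorem~\ref{thm:monotone} by hand for the discretised mechanism. You instead factor through Theorem~\ref{thm:monotone}: compose the weakly increasing rounding map with the monotone allocation of $M$ to obtain a monotone real-valued allocation, observe that its real-valued critical price coincides with the discrete critical price $b^*_\varepsilon(v)$ (this is exactly where the rounding direction is used, and your explanation of why it must be down for maximisation and up for minimisation matches the paper's), and conclude. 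Your route makes the role of the rounding direction more transparent and avoids repeating the utility case analysis; the paper's route is slightly more self-contained in that it does not rely on Theorem~\ref{thm:monotone} being applicable over an arbitrary bid domain.
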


\begin{proof}
	Let $\lfloor w(v) \rfloor$ and $\lceil w(v) \rceil$ denote the closest lower and higher values, respectively, to $w(v)$ in $W_{\varepsilon}$. Observe that that the critical price is by definition in $W_{\varepsilon}$.
	
	First, consider binary maximisation problems. Let $S$ denote the set selected by the algorithm. Assume that a focal agent $v$ bidding $b(v) = w(v)$ results in $v \notin S$. This implies that $b^*(v) \geq \lceil w(v) \rceil$. Bidding $b(v) \geq b^*(v)$ results in non-positive utility, and all bids $b(v) < b^*(v)$ result in utility 0.
	Now assume that truthful bidding results in $v \in S$. Since the discretisation step is rounded down, this implies that $w(v) \geq b^*(v)$. All bids above $b^*(v)$ give the same non-negative utility, and all bids $b(v) < b^*(v)$ give utility 0.
	
	An analogous result applies for binary minimisation problems, where rounding is in the opposite direction. If focal agent $v$ is selected with the truth, we have that $b^*(v) \geq \lceil w(v) \rceil$. Total utility is the same non-negative value $b^*(v) - w(v)$ for all bids $b(v) \leq b^*(v)$ and 0 otherwise. If focal agent is not selected with the truth we have that $b^*(v) < w(v)$. Total utility for any $b(v) > b^*(v)$ is 0 and total utility for any $b(v) \leq b^*(v)$ is non-positive $b^*(v) - w(v)$.
\end{proof}

\subsection{Solution quality and running time}

The size of the discretisation step affects the quality of the solution and the number of different weights (and therefore running time). The exception to the latter is the local ratio algorithm for vertex cover from Section~\ref{sec:vertex-cover-mechanism}: its running time has no dependency on the maximum weight.

\begin{theorem}
	Assume that a mechanism $M$ computes an $\alpha$-approximation to binary optimisation problem $P$. Then the discretised mechanism $M_{\varepsilon}$ computes an $\alpha/(1-\varepsilon)$-approximation to $P$.
\end{theorem}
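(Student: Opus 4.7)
The plan is to transfer the $\alpha$-approximation guarantee from the discretised weight vector $w_\varepsilon$ back to the original weights $w$ by exploiting the rounding direction. Three objects appear in the comparison: the output $S$ of $M_\varepsilon$, a true optimum $S^*$ under $w$, and a discretised optimum $S^*_\varepsilon$ under $w_\varepsilon$. Because $M$ is scale-free, its $\alpha$-approximation guarantee applies to the instance with weights $w_\varepsilon$, so we may freely invoke $w_\varepsilon(S) \geq w_\varepsilon(S^*_\varepsilon)/\alpha$ (respectively $\leq \alpha\,w_\varepsilon(S^*_\varepsilon)$ for minimisation).

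For a binary maximisation problem I would argue as follows. The round-down rule gives the per-element bound $w(v) - \varepsilon \leq w_\varepsilon(v) \leq w(v)$, and under the normalisation that positive weights are at least $1$ this upgrades to $w_\varepsilon(v) \geq (1-\varepsilon)\,w(v)$ for each agent, hence $w_\varepsilon(T) \geq (1-\varepsilon)\,w(T)$ for any set $T$. Combining the approximation guarantee with the optimality of $S^*_\varepsilon$ under $w_\varepsilon$ (which lets us replace it with $S^*$), the chain
\begin{align*}
w(S) \;\geq\; w_\varepsilon(S) \;\geq\; \frac{w_\varepsilon(S^*_\varepsilon)}{\alpha} \;\geq\; \frac{w_\varepsilon(S^*)}{\alpha} \;\geq\; \frac{(1-\varepsilon)\,w(S^*)}{\alpha}
\end{align*}
rearranges to the required $w(S^*)/w(S) \leq \alpha/(1-\varepsilon)$.

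The minimisation case is symmetric: rounding up gives $w(v) \leq w_\varepsilon(v) \leq (1+\varepsilon)\,w(v)$, and the analogous chain $w(S) \leq w_\varepsilon(S) \leq \alpha\,w_\varepsilon(S^*_\varepsilon) \leq \alpha\,w_\varepsilon(S^*) \leq \alpha(1+\varepsilon)\,w(S^*)$ yields a ratio of at most $\alpha(1+\varepsilon) \leq \alpha/(1-\varepsilon)$. The main subtlety is the single per-element step that converts the additive discretisation error $\varepsilon$ into a relative factor: this is where the scale of the weights enters, and it is handled by the natural normalisation that individual weights are bounded away from zero. Everything else is routine manipulation of the three inequalities above.
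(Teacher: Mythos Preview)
Your argument is correct and follows essentially the same route as the paper: the same chain $w(S) \geq w_\varepsilon(S) \geq w_\varepsilon(S^*_\varepsilon)/\alpha \geq w_\varepsilon(S^*)/\alpha \geq (1-\varepsilon)w(S^*)/\alpha$, with the minimisation case handled symmetrically. If anything, you are more explicit than the paper about the normalisation (positive weights bounded below by $1$) needed to turn the additive rounding error $\varepsilon$ into the multiplicative factor $(1-\varepsilon)$; the paper simply asserts $w_\varepsilon(S)\geq(1-\varepsilon)w(S)$ from the per-node bound without isolating that step.
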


\begin{proof}
	Assume $P$ is a maximisation problem. The proof for minimisation problems is analogous. Since we are assuming that the agents are follow their (weakly) dominant strategy, let $w_{\varepsilon}$ denote the discretised weight function and use this directly instead of the bids. Since the weight of each node changes in the rounding by at most $\varepsilon$, we have that  $w_{\varepsilon}(S) \geq (1-\varepsilon)w(S)$ for any solution $S$. This holds in particular for any optimal solution under $w_{\varepsilon}$, denoted by $S_{\varepsilon}^*$. Therefore, considering any optimal solution $S^*$, we have $w_{\varepsilon}(S_{\varepsilon}^*) \geq w_{\varepsilon}(S^*) \geq (1-\varepsilon)w(S^*)$. 
	The mechanism $M_{\varepsilon}$ computes an $\alpha$-approximation $S$ with respect to $w_{\varepsilon}$, and therefore we have that 
	\[
	w(S) \geq w_{\varepsilon}(S) \geq w_{\varepsilon}(S^*_{\varepsilon}) / \alpha \geq (1-\varepsilon)w(S^*) / \alpha.
	\] 
\end{proof}

\section*{Acknowledgements} We thank Dennis Olivetti for pointing out the connection between colourings and orientations of bounded length. We thank Daniel Hauser, Mitri Kitti, Pauli Murto, Jukka Suomela, and Juuso V{\"a}lim{\"a}ki for discussions.

\bibliographystyle{alphaurl}
\bibliography{bibliography}

@article{nash50equilibrium,
author = {John F. Nash },
title = {Equilibrium points in $n$-person games},
journal = {Proceedings of the National Academy of Sciences},
volume = {36},
number = {1},
pages = {48-49},
year = {1950},
doi = {10.1073/pnas.36.1.48}
}

@article{jordehi19optimisation,
title = {Optimisation of demand response in electric power systems, a review},
journal = {Renewable and Sustainable Energy Reviews},
volume = {103},
pages = {308-319},
year = {2019},
issn = {1364-0321},
doi = {10.1016/j.rser.2018.12.054},
author = {A. Rezaee Jordehi}
}

@article{varian07position,
title = {Position auctions},
journal = {International Journal of Industrial Organization},
volume = {25},
number = {6},
pages = {1163--1178},
year = {2007},
doi = {10.1016/j.ijindorg.2006.10.002},
author = {Hal R. Varian}
}

@article{clarke71,
title={Multipart pricing of goods},
journal={Public Choice},
volume={11},
pages = {17--33},
year={1971},
author = {Edward H. Clarke}}

@article{groves73,
author = {Theodore Groves},
title = {Incentives in Teams},
journal = {Econometrica},
volume = {41},
pages = {617--631},
year = {1973}}

@article{calinescu15bounding,
title = {Bounding the payment of approximate truthful mechanisms},
journal = {Theoretical Computer Science},
volume = {562},
pages = {419--435},
year = {2015},
doi = {10.1016/j.tcs.2014.10.019},
author = {Gruia Calinescu}
}

@inbook{feigenbaum07book, 
place={Cambridge}, 
title={Distributed Algorithmic Mechanism Design}, 
booktitle={Algorithmic Game Theory}, 
publisher={Cambridge University Press}, 
author={Feigenbaum, Joan and Schapira, Michael and Shenker, Scott},
year={2007}, 
pages={363–-384}}

@article{carroll11distributed,
title = {Distributed algorithmic mechanism design for scheduling on unrelated machines},
journal = {Journal of Parallel and Distributed Computing},
volume = {71},
number = {3},
pages = {397--406},
year = {2011},
doi = {10.1016/j.jpdc.2010.11.004},
author = {Thomas E. Carroll and Daniel Grosu}
}

@article{feigenbaum01sharing,
title = {Sharing the Cost of Multicast Transmissions},
journal = {Journal of Computer and System Sciences},
volume = {63},
number = {1},
pages = {21--41},
year = {2001},
doi = {10.1006/jcss.2001.1754},
author = {Joan Feigenbaum and Christos H. Papadimitriou and Scott Shenker}
}

@inproceedings{feigenbaum02bgp,
author = {Feigenbaum, Joan and Papadimitriou, Christos and Sami, Rahul and Shenker, Scott},
title = {A BGP-based mechanism for lowest-cost routing},
year = {2002},
publisher = {Association for Computing Machinery},
doi = {10.1145/571825.571856},
booktitle = {Proc. 21st Annual Symposium on Principles of Distributed Computing (PODC)},
pages = {173–-182}
}

@article{li10mechanism,
title = {Mechanism design for set cover games with selfish element agents},
journal = {Theoretical Computer Science},
volume = {411},
number = {1},
pages = {174--187},
year = {2010},
doi = {10.1016/j.tcs.2009.09.024},
author = {Xiang-Yang Li and Zheng Sun and WeiZhao Wang and Xiaowen Chu and ShaoJie Tang and Ping Xu}
}

@inproceedings{devanur03strategy,
author = {Devanur, Nikhil R. and Mihail, Milena and Vazirani, Vijay V.},
title = {Strategyproof cost-sharing mechanisms for set cover and facility location games},
year = {2003},
publisher = {ACM},
doi = {10.1145/779928.779942},
booktitle = {Proc. 4th ACM Conference on Electronic Commerce},
pages = {108–-114}
}

@book{vazirani10approximation,
author = {Vazirani, Vijay V.},
title = {Approximation Algorithms},
year = {2010},
isbn = {3642084699},
publisher = {Springer Publishing Company, Incorporated}
}

@article{baryehuda04local,
author = {Bar-Yehuda, Reuven and Bendel, Keren and Freund, Ari and Rawitz, Dror},
title = {Local ratio: A unified framework for approximation algorithms. In Memoriam: Shimon Even 1935-2004},
year = {2004},
issue_date = {December 2004},
publisher = {Association for Computing Machinery},
volume = {36},
number = {4},
doi = {10.1145/1041680.1041683},
journal = {ACM Comput. Surv.},
pages = {422–-463}
}

@article{baryehuda81linear,
title = {A linear-time approximation algorithm for the weighted vertex cover problem},
journal = {Journal of Algorithms},
volume = {2},
number = {2},
pages = {198--203},
year = {1981},
doi = {10.1016/0196-6774(81)90020-1},
author = {Reuven Bar-Yehuda and Shimon Even}
}

@InProceedings{hirvonen24fast,
  author       = {Juho Hirvonen and
                  Sara Ranjbaran},
  title        = {Fast, Fair and Truthful Distributed Stable Matching for Common Preferences},
  booktitle = {Proc. 28th International Conference on Principles of Distributed Systems (OPODIS 2024)},
  year         = {2024},
  series =	{Leibniz International Proceedings in Informatics (LIPIcs)},
  url          = {https://doi.org/10.48550/arXiv.2402.16532},
  note         = {To appear.}
}

@inproceedings{ostrovsky2015fast,
  author       = {Rafail Ostrovsky and
                  Will Rosenbaum},
  title        = {Fast Distributed Almost Stable Matchings},
  booktitle    = {Proc. 2015 {ACM} Symposium on Principles of Distributed
                  Computing ({PODC} 2015)},
  pages        = {101--108},
  publisher    = {{ACM}},
  year         = {2015},
  doi          = {10.1145/2767386.2767424},
  timestamp    = {Tue, 06 Nov 2018 11:07:19 +0100},
  biburl       = {https://dblp.org/rec/conf/podc/OstrovskyR15.bib},
  bibsource    = {dblp computer science bibliography, https://dblp.org}
}

@InProceedings{hirvonen23convergence,
  author =	{Hirvonen, Juho and Schmid, Laura and Chatterjee, Krishnendu and Schmid, Stefan},
  title =	{{On the Convergence Time in Graphical Games: A Locality-Sensitive Approach}},
  booktitle =	{Proc. 27th International Conference on Principles of Distributed Systems (OPODIS 2023)},
  pages =	{11:1--11:24},
  series =	{Leibniz International Proceedings in Informatics (LIPIcs)},
  year =	{2024},
  volume =	{286},
  publisher =	{Schloss Dagstuhl -- Leibniz-Zentrum f{\"u}r Informatik},
  address =	{Dagstuhl, Germany},
  doi =		{10.4230/LIPIcs.OPODIS.2023.11},
  annote =	{Keywords: distributed computing, Nash equilibria, mechanism design, best-response dynamics}
}

@inproceedings{elkind07frugality,
author = {Elkind, Edith and Goldberg, Leslie Ann and Goldberg, Paul W.},
title = {Frugality ratios and improved truthful mechanisms for vertex cover},
year = {2007},
publisher = {Association for Computing Machinery},
doi = {10.1145/1250910.1250959},
booktitle = {Proc. 8th ACM Conference on Electronic Commerce},
pages = {336–-345}
}

@inproceedings{afek14building,
author = {Afek, Yehuda and Ginzberg, Yehonatan and Landau Feibish, Shir and Sulamy, Moshe},
title = {Distributed computing building blocks for rational agents},
year = {2014},
publisher = {Association for Computing Machinery},
doi = {10.1145/2611462.2611481},
booktitle = {Proc. 2014 ACM Symposium on Principles of Distributed Computing (PODC)},
pages = {406-–415}
}

@misc{nobel12,
  title={Scientific Background on the Sveriges Riksbank Prize in Economic Sciences in Memory of Alfred Nobel 2012: Stable allocations and the practice of market design},
  author={Economic Sciences Prize Committee of the Royal Swedish Academy of Sciences},
  howpublished={\url{https://www.kva.se/app/uploads/2012/10/globalassets-priser-ekonomi-2012-scibackeken12.pdf}},
  note = {Accessed:15-02-2024},
  year = {2012}
}

@article{roth04kidney,
    author = {Roth, Alvin E. and S{\"o}nmez, Tayfun and {\"U}nver, M. Utku},
    title = "{Kidney Exchange}",
    journal = {The Quarterly Journal of Economics},
    volume = {119},
    number = {2},
    pages = {457-488},
    year = {2004},
    month = {05},
    doi = {10.1162/0033553041382157}
}

@article{edelman07internet,
Author = {Edelman, Benjamin and Ostrovsky, Michael and Schwarz, Michael},
Title = {Internet Advertising and the Generalized Second-Price Auction: Selling Billions of Dollars Worth of Keywords},
Journal = {American Economic Review},
Volume = {97},
Number = {1},
Year = {2007},
Pages = {242–-259},
DOI = {10.1257/aer.97.1.242}}

@article{vickrey61auction,
author = {Vickrey, William},
title = {COUNTERSPECULATION, AUCTIONS, AND COMPETITIVE SEALED TENDERS},
journal = {The Journal of Finance},
volume = {16},
number = {1},
pages = {8--37},
doi = {10.1111/j.1540-6261.1961.tb02789.x},
year = {1961}
}

@article{myerson81optimal,
author = {Myerson, Roger B.},
title = {Optimal Auction Design},
journal = {Mathematics of Operations Research},
volume = {6},
number = {1},
pages = {58--73},
year = {1981},
doi = {10.1287/moor.6.1.58},
}

@article{Naor1995,
  author = {Naor, Moni and Stockmeyer, Larry},
  doi = {10.1137/S0097539793254571},
  journal = {SIAM Journal on Computing},
  number = {6},
  pages = {1259--1277},
  title = {{What Can be Computed Locally?}},
  volume = {24},
  year = {1995}
}

@inproceedings{maus20linial,
  author       = {Yannic Maus and
                  Tigran Tonoyan},
  editor       = {Hagit Attiya},
  title        = {Local Conflict Coloring Revisited: Linial for Lists},
  booktitle    = {Proc. 34th International Symposium on Distributed Computing ({DISC} 2020)},
  series       = {LIPIcs},
  volume       = {179},
  pages        = {16:1--16:18},
  publisher    = {Schloss Dagstuhl - Leibniz-Zentrum f{\"{u}}r Informatik},
  year         = {2020},
  doi          = {10.4230/LIPICS.DISC.2020.16},
  timestamp    = {Mon, 21 Dec 2020 13:23:22 +0100},
  biburl       = {https://dblp.org/rec/conf/wdag/MausT20.bib},
  bibsource    = {dblp computer science bibliography, https://dblp.org}
}

@article{sakai03note,
title = {A note on greedy algorithms for the maximum weighted independent set problem},
journal = {Discrete Applied Mathematics},
volume = {126},
number = {2},
pages = {313-322},
year = {2003},
issn = {0166-218X},
doi = {10.1016/S0166-218X(02)00205-6},
author = {Shuichi Sakai and Mitsunori Togasaki and Koichi Yamazaki}
}

@article{chvatal79greedy,
 author = {V. Chvatal},
 journal = {Mathematics of Operations Research},
 number = {3},
 pages = {233--235},
 publisher = {INFORMS},
 title = {A Greedy Heuristic for the Set-Covering Problem},
 urldate = {2024-04-12},
 volume = {4},
 year = {1979},
 url = {http://www.jstor.org/stable/3689577}
}

@article{johnson74approximation,
title = {Approximation algorithms for combinatorial problems},
journal = {Journal of Computer and System Sciences},
volume = {9},
number = {3},
pages = {256--278},
year = {1974},
doi = {10.1016/S0022-0000(74)80044-9},
author = {David S. Johnson}
}

@article{roughgarden2007routing,
  title={Routing games},
  author={Roughgarden, Tim},
  year={2007},
  journal={Algorithmic game theory},
  volume={18},
  pages={459--484},
  publisher={Cambridge University Press Cambridge, MA}
}

@article{Linial1992,
  author = {Linial, Nathan},
  doi = {10.1137/0221015},
  journal = {SIAM Journal on Computing},
  number = {1},
  pages = {193--201},
  title = {{Locality in Distributed Graph Algorithms}},
  volume = {21},
  year = {1992}
}

@book{Peleg2000,
  author = {Peleg, David},
  doi = {10.1137/1.9780898719772},
  publisher = {Society for Industrial and Applied Mathematics},
  title = {{Distributed Computing: A Locality-Sensitive Approach}},
  year = {2000}
}

@article{Luby1986,
  author = {Luby, Michael},
  doi = {10.1137/0215074},
  journal = {SIAM Journal on Computing},
  number = {4},
  pages = {1036--1053},
  title = {{A Simple Parallel Algorithm for the Maximal Independent Set Problem}},
  volume = {15},
  year = {1986}
}

@article{floreen10stable,
  author    = {Patrik Flor{\'{e}}en and
               Petteri Kaski and
               Valentin Polishchuk and
               Jukka Suomela},
  title     = {Almost Stable Matchings by Truncating the Gale-Shapley Algorithm},
  journal   = {Algorithmica},
  volume    = {58},
  number    = {1},
  pages     = {102--118},
  year      = {2010},
  doi       = {10.1007/s00453-009-9353-9},
  timestamp = {Wed, 14 Nov 2018 10:49:05 +0100},
  biburl    = {https://dblp.org/rec/journals/algorithmica/FloreenKPS10.bib},
  bibsource = {dblp computer science bibliography, https://dblp.org}
}

@book{nisan07algorithmic,
  place={Cambridge},
  title={Algorithmic Game Theory},
  DOI={10.1017/CBO9780511800481},
  publisher={Cambridge University Press},
  year={2007},
  editor={Noam Nisan and Tim Roughgarden and Eva Tardos and Vijay V. Vazirani}
}

@inproceedings{collet18equilibria,
  author    = {Simon Collet and
               Pierre Fraigniaud and
               Paolo Penna},
  title     = {Equilibria of Games in Networks for Local Tasks},
  booktitle = {Proc. 22nd International Conference on Principles of Distributed Systems
               ({OPODIS} 2018)},
  series    = {LIPIcs},
  volume    = {125},
  pages     = {6:1--6:16},
  publisher = {Schloss Dagstuhl - Leibniz-Zentrum f{\"{u}}r Informatik},
  year      = {2018},
  doi       = {10.4230/LIPIcs.OPODIS.2018.6},
  timestamp = {Tue, 11 Feb 2020 15:52:14 +0100},
  biburl    = {https://dblp.org/rec/conf/opodis/ColletFP18.bib},
  bibsource = {dblp computer science bibliography, https://dblp.org}
}

@inbook{alon10constant,
author = {Alon, Noga},
title = {On Constant Time Approximation of Parameters of Bounded Degree Graphs},
year = {2010},
publisher = {Springer-Verlag},
address = {Berlin, Heidelberg},
booktitle = {Property Testing: Current Research and Surveys},
pages = {234–239},
numpages = {6}
}

@article{gale62sm,
author = {Gale, David and Shapley, Lloyd S.},
title = {College Admissions and the Stability of Marriage},
journal = {The American Mathematical Monthly},
volume = {69},
number = {1},
year = {1962},
pages = {9--15},
doi = {10.2307/2312726}
}

@article{Acher-approximation,
  author       = {Aaron Archer and
                  Joan Feigenbaum and
                  Arvind Krishnamurthy and
                  Rahul Sami and
                  Scott Shenker},
  title        = {Approximation and collusion in multicast cost sharing},
  journal      = {Games Econ. Behav.},
  volume       = {47},
  number       = {1},
  pages        = {36--71},
  year         = {2004},
  doi          = {10.1016/S0899-8256(03)00176-3}
}

@article{Prabodini-IoT,
  author       = {Prabodini Semasinghe and
                  Setareh Maghsudi and
                  Ekram Hossain},
  title        = {Game Theoretic Mechanisms for Resource Management in Massive Wireless
                  IoT Systems},
  journal      = {{IEEE} Commun. Mag.},
  volume       = {55},
  number       = {2},
  pages        = {121--127},
  year         = {2017},
  doi          = {10.1109/MCOM.2017.1600568CM},
  timestamp    = {Tue, 23 Aug 2022 09:19:55 +0200},
  biburl       = {https://dblp.org/rec/journals/cm/SemasingheMH17.bib},
  bibsource    = {dblp computer science bibliography, https://dblp.org}
}

\appendix

\section{Proof of Lemma~\ref{lem:is-approx}} \label{app:is-apx-proof}

\begin{proof}
	The algorithm proceeds in steps, each of which consists of adding set of local maxima (according to $<$) to $I$ and then removing them and their neighbours from the graph. Let $G_i$ denote the graph after the $i$th step ($G_0 = G$). Similarly, let $I_i$ denote the nodes added to the independent set in step $i$. Finally, let $\alpha(G)$ denote the weight of the optimal solution in $G$.
	
	First observe that for each $v \in I_i$ it holds that $w(v) \geq (\sum_{u \in N_{G_i}(v)} w(u)) / \deg_{G_i}(v)$: the weight of a local maximum is at least the average weight of its neighbours active when it is picked. Next, observe that for each $v \in I_i$, we have that $\alpha(G[N^+_{G_i}(v)]) \leq \max \{ w(v), \sum_{u \in N_{G_i}(v)} w(u) \}$: the optimum solution constrained to a neighbourhood is either the focal node or all of its neighbours. In both cases it holds that $\alpha(G[N^+_{G_i}(v)]) \leq \Delta w(v)$. 
	
	Since all nodes are processed by the end, we have that 
	\[
	\alpha(G) \leq \sum_{i = 1}^T \sum_{v \in I_i} \alpha(G[N^+_{G_i}(v)]) \leq \Delta w(I). \qedhere
	\]
	This first inequality follows from the fact that when each node is picked, its weight is at most the weight of the heaviest independent set in the subgraph induced by it and its active neighbours.
\end{proof}

\end{document}